\newtheorem*{theorem 2.1}{Theorem 2.1}
\newtheorem*{theorem 2.2}{Theorem 2.2}
\newtheorem*{theorem 3.1}{Theorem 3.1}
\newtheorem*{theorem 4.1}{Theorem 4.1}
\newtheorem*{theorem 4.2}{Theorem 4.2}
\begin{document}

\title{Downsizing Diffusion Models for Cardinality Estimation}

\author{Xinhe Mu}
\email{muxinhe22@mails.ucas.ac.cn}
\orcid{0009-0002-3952-5966}
\affiliation{%
  \department{Academy of Mathematics and Systems Sciences}
  \institution{Chinese Academy of Sciences}
  \city{Beijing}
  \country{China}
}
\author{Zhaoqi Zhou}\authornote{Corresponding author.}
\email{zhouzhaoqi1@huawei.com}
\affiliation{%
  \institution{Huawei Technologies Co., Ltd.}
  \city{Beijing}
  \country{China}
}
\author{Zaijiu Shang}
\email{zaijiu@simis.cn; zaijiu@amss.ac.cn
}
\affiliation{%
  \institution{Center for Mathematics
and Interdisciplinary Sciences at Fudan University}
  \institution{Shanghai Institute for Mathematics
and Interdisciplinary Sciences}
  \city{Shanghai}
  \country{China}
}
\author{Chuan Zhou}
\email{zhouchuan@amss.ac.cn
}
\affiliation{%
  \department{Academy of Mathematics and Systems Sciences}
  \institution{Chinese Academy of Sciences}
  \city{Beijing}
  \country{China}
}
\author{Gang Fu}
\email{fugang12@huawei.com}
\affiliation{%
  \institution{Huawei Technologies Co., Ltd.}
  \city{Beijing}
  \country{China}
}
\author{Guiying Yan}
\email{yangy@amt.ac.cn}
\affiliation{%
  \department{Academy of Mathematics and Systems Sciences}
  \institution{Chinese Academy of Sciences}
  \city{Beijing}
  \country{China}
}
\author{Guoliang Li}
\email{liguoliang@tsinghua.edu.cn}
\affiliation{%
  \institution{Tsinghua University}
  \city{Beijing}
  \country{China}
}
\author{Zhiming Ma}
\email{mazm@amt.ac.cn
}
\affiliation{%
  \department{Academy of Mathematics and Systems Sciences}
  \institution{Chinese Academy of Sciences}
  \city{Beijing}
  \country{China}
}

\renewcommand{\shortauthors}{Trovato et al.}

\begin{abstract}

Learned cardinality estimation requires accurate model designs to capture the local characteristics of probability distributions. However, existing models may fail to accurately capture complex, multilateral dependencies between attributes. Diffusion models, meanwhile, can succeed in estimating image distributions with thousands of dimensions, making them promising candidates, but their heavy weight and high latency prohibit effective implementation. We seek to make diffusion models more lightweight by introducing Accelerated Diffusion Cardest (ADC), the first "downsized" diffusion model framework for efficient, high-precision cardinality estimation. ADC utilizes a hybrid architecture that integrates a Gaussian Mixture-Bayesnet selectivity estimator with a score-based density estimator to perform precise Monte Carlo integration. Addressing the issue of prohibitive inference latencies common in large generative models, we provide theoretical advancements concerning the asymptotic behavior of score functions as time $t$ approaches zero and convergence rate estimates as $t$ increases, enabling the adaptation of score-based diffusion models to the moderate dimensionalities and stringent latency requirements of database systems.

We also introduce ADC+, an optimized variant that dynamically identifies queries with high volume and selectivity, bypassing complex density evaluations in these cases to reduce variance and latency. Through experiments conducted against five learned estimators, including the state-of-the-art Naru, we demonstrate that ADC and ADC+ offer superior robustness when handling datasets with multilateral dependencies, which cannot be effectively summarized using pairwise or triple-wise correlations. In fact, ADC+ is 10 times more accurate than Naru on such datasets. Additionally, ADC+ achieves competitive accuracy comparable to Naru across all tested datasets while maintaining latency half that of Naru's and requiring minimal storage (<350KB) on most datasets.

\end{abstract}


\pagestyle{plain}
\pagenumbering{arabic}
\maketitle

\section{Introduction}
Score-based Diffusion Models have presented the deep learning community with a very, if not the most, attractive way of acquiring highly local information about a complex high-dimensional probability distribution \cite{reftwentyseven}. These models work by training a neural network to approximate the score (gradient of its log function) of an evolving probability density function, which, starting out as the distribution to be learned, is gradually turned into Gaussian noise through a diffusion process. Using the learned score function, score-based diffusion models can then reverse the diffusion process by solving a backward diffusion stochastic differential equation (SDE), and, in doing so, sample from \cite{refthirteen}\cite{reftwelve}\cite{refone} or evaluate \cite{reftwo}\cite{refone} the underlying distribution at the user’s request.

Given their success, one may naturally wonder whether cardinality estimation, a task crucial for the ideal performance of query optimizers, can also benefit from diffusion models. Indeed, several learned cardinality estimation models, such as Naru\cite{refsix} and DQM-D\cite{refseven}, are already performing cardinality estimation by approximating an underlying probability density function at individual points, then numerically integrating it across the queried region using various Monte Carlo schemes\cite{refeleven}\cite{refsix}. However, simply grabbing a popular score-based diffusion model off the shelf probably won’t work due to the three reasons below.

First, while conventional models used for image generation excel at approximating distributions with thousands of dimensions, the sampling process often takes up to tens of seconds, even with the help of advanced TPUs\cite{refeight} and time-saving techniques like consistency models\cite{reffour} or DPM Solver\cite{reften}, which is often longer than the time it costs to actually execute the query.

Second, image generation and cardinality estimation require different magnitudes of precision: while making it five times more likely to generate a specific image gives a generative model its unique art style, predicting 5000 tuples to exist in a region where only 1000 actually do gives a query optimizer an optimization disaster.

Most importantly, existing works on diffusion models have never touched on the problem of integrating the probability density function across a given region, a task useless for image generation, yet crucial for cardinality estimation.

We aim to overcome the aforementioned limitations by presenting Accelerated Diffusion Cardest (ADC) and ADC+ \footnote{See \url{https://github.com/XinheMu/ADC-Replication-} for open source code of ADC}, downsized diffusion models suited for cardinality estimation with their structure described in \textbf{Figure 1}, where
\begin{itemize}
\item ADC answers ranged queries via a prediction-sampling-correction Monte Carlo algorithm, and
\item ADC+ takes a shortcut to skip sampling and correction when answering high volume, high selectivity queries, for which the prediction is already accurate enough.
\end{itemize}
\begin{figure}[h]
  \label{Figure 1}
  \centering
  \includegraphics[width=\linewidth]{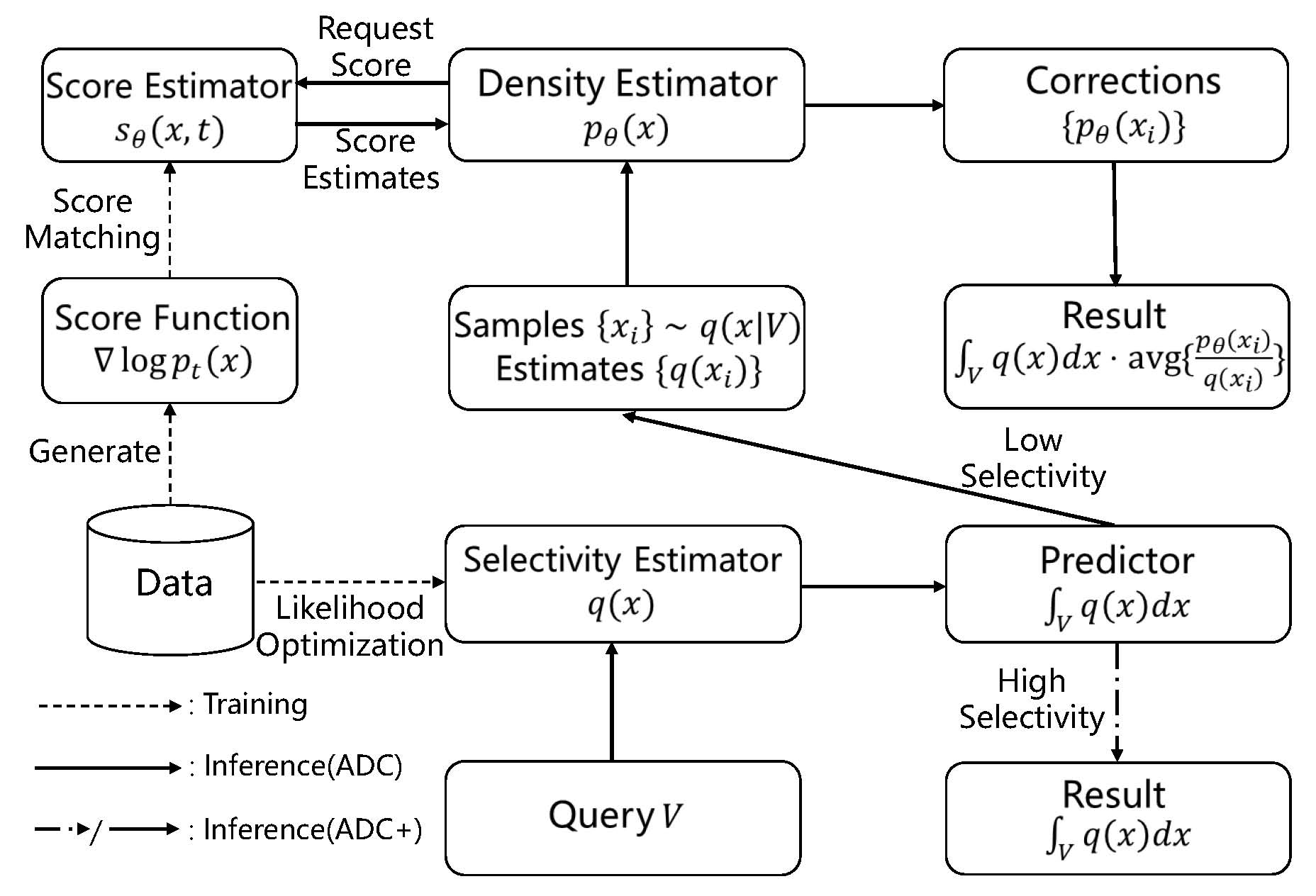}
  \caption{ADC's training and inference framework.}
  \Description{ADC's training and inference workflow. During training, data points are used to (1) directly train the GMM selectivity estimator using maximum likelihood and (2) generate information for the score function, which is then used to train the score estimator. During inference, the GMM selectivity estimator calculates a rough selectivity prediction and conditionally samples points for further evaluation, while the density estimator, accepting input from the score estimator, corrects that prediction by evaluating the probability density function at sampled points with much more accuracy.}
\end{figure}

Here, our major contributions are:
\begin{enumerate}
\item \textit{Downsizing Diffusion Models}. We present the first application of diffusion models to cardinality estimation, providing theoretical guidance that results in more expressive networks and better parameter choices under moderate dimensionalities. We overcome existing challenges by presenting two major findings regarding the diffusion process and the score function. That is, we give asymptotic estimates for the score function as $t\to 0$ around regions with different continuity conditions, helping us mitigate the problem of score blowup by introducing normalization coefficients that grow at different speeds, and we give lower bounds for the speed at which $\nabla\log p_t(x)$ converges to $-\frac{x}{\sigma^2(t)}$ as $t$ grows large, which proves faster than what is generally assumed \cite{reftwentyseven}, helping us choose the integration time-step and $T$, the perturbation upper bound. Using these results, one can design diffusion models that are more accurate and efficient in the database scenario, utilizing their relative freedom from the curse of dimensionality.

\item \textit{Constructing ADC/ADC+}. Using such findings, we design ADC, a diffusion-based algorithm that estimates query cardinality using three modules: a \textit{Score Estimator} $s_\theta(x,t)$ to match the score function $\nabla \log p_t(x)$, a \textit{Density Estimator} $p_\theta(x)$ to estimate pointwise densities using an integration formula for likelihood evaluation \cite{reftwo}, and a Gaussian Mixture Model(GMM)-Bayesnet hybrid \textit{Selectivity Estimator} $q(x)$ to be the predictor in the Monte Carlo formula 
\begin{align}
\label{MC}
\int_V p(x)dx=\int_Vq(x)dx\cdot\mathrm{E}_{q(x|x\in V)}\frac{p(x)}{q(x)}
\end{align}
for ranged queries. ADC+ further uses features "predicted selectivity $\int_Vq(x)dx$" and "query volume" to identify queries for which the variance added by $\mathrm{E}_{q(x|x\in V)}\frac{p(x)}{q(x)}$ exceeds the bias it corrects, in which case it trusts the selectivity estimator to directly output $\int_Vq(x)dx$, resulting in lower median Q-error and latency compared to ADC.
\item \textit{Experimental Evaluation}. We test ADC and ADC+ against five learned estimators (DeepDB\cite{refthirtythree}, Naru\cite{refsix}, LW-NN, LW-Tree\cite{refeighteen}, MSCN\cite{refseventeen}), adhering to well-established benchmarks for dataset choice and workload design\cite{reffive}. We find that ADC and ADC+ are able to learn complex, multilateral dependencies between attributes that cannot be captured by pairwise or triple-wise correlations. In fact, on a synthetic dataset designed to magnify the above trait, ADC+ performs best in the models tested, being 10 times more accurate \footnote{We define "accuracy" as the difference between Q-error and 1, as 1 is Q-error's optimal value} than Naru on 95\%th and 99\%th Q-error. ADC+ also performs well on real-world datasets, rivaling Naru in accuracy and almost always beating all other models, using less than 350KB of storage space and with a latency of less than 10ms per query for the majority of datasets.
\end{enumerate}
\section{Background and Related Work}
\subsection{Cardinality Estimation}
\subsubsection{Problem Formulation} Consider a relation $R$ with attribute domains $\{A_1,...,A_d\}$ and a query $Q$, which can be viewed as a subset $\Omega \subseteq A_1\times A_2\times...\times A_d$, \textit{cardinality estimation} requires us to estimate the query's cardinality, i.e., $|\Omega \cap R|$, without conducting a full search of the relation $R$.

More specifically, existing research on cardinality estimation, such as \cite{reffourteen} and \cite{refeighteen}, often consider an important subproblem in which $\{A_i\}_{i=1}^d$ are all bounded intervals of $\mathbb{R}$ and $\Omega=\prod_{i=1}^dB_i$, with $B_i\subseteq A_i$ a subinterval of $A_i$ (possibly $A_i$ itself). This is also the subproblem we study throughout the paper.

\subsubsection{Current Methods} Current approaches to cardinality estimation can be very roughly categorized into learned and non-learned methods \cite{reffive}. Traditional non-learned methods, such as small-scale sampling, multidimensional histograms \cite{reffifteen}, Bayesnet \cite{refsixteen}, and kernel density estimators \cite{reffourteen}, are usually more time-efficient and relatively robust to constant updates \cite{reffive}, but less accurate than learned models. Thus, they are of more use when the data distribution is simple, when the database is frequently updated, or when the precision requirements are not strict.

Learned models, on the other hand, usually conduct cardinality estimation in one of two ways \cite{reffive}. Regression models, such as MSCN \cite{refseventeen}, LW-tree, and LW-NN \cite{refeighteen}, learn from labeled queries by converting them into feature vectors, then constructing a regression model to match these vectors with the true cardinality of corresponding queries. Besides producing estimation results more accurate than traditional models, these models also have the advantage of taking relatively little time to train, and can make an estimate as quickly as many non-learned methods.

Joint distribution models, meanwhile, assume that data points are distributed according to a probability density function $p(x)$, and thus turn cardinality estimation into a numerical integration problem: that is, calculating $p(x)$ at individual points, then summing them up using a predefined integration scheme. For example, estimators like Naru \cite{refsix} and DQM-D \cite{refseven} calculate the joint distribution function by factorizing it into conditional distributions 
$$p(A_1,...,A_n)=\prod_{i=1}^dp(A_i|A_1,...,A_{i-1}),$$
then sum them up using the Monte Carlo formula (\ref{MC})
with the predictor, $q(x)$, constructed via sequential sampling (Naru), VEGAS \cite{refeleven} (DQM-D), or other algorithms. While joint distribution models cost more time to train and infer, they are known to produce some of the most accurate cardinality estimators.
\subsubsection{Why Diffusion Models?} The high accuracy of joint distribution models invites us to dig deeper into this specific category. However, current joint distribution models all share a subtle but common limitation: conventional methods of constructing the estimator $p$, be it autogression (Naru, DQM-D) or sum-product networks (DeepDB \cite{refthirtythree}), tend to treat a tuple as a collection of correlated, but ultimately distinct, set of attributes rather than a single unified point, risking declines in accuracy if different attributes exhibit complex, multilateral correlations (eg. lying a short distance from several disjoint submanifolds of high curvature) that cannot be efficiently captured by pairwise or even triple-wise correlation alone. While Naru, arguably the state-of-the-art joint distribution estimator, attempts to handle this by factorizing autoregression models in the most expressive way possible (i.e., considering all prior attributes when predicting a new one), it thus experiences a high latency, and, as shown in our experiments, still struggles to capture enough details when the distribution grows too complex.

A diffusion model, on the other hand, generates an image by solving a unified reverse diffusion equation, rather than predicting the values of independent pixel clusters. What's more, their accuracy and expressiveness are repeatedly proven by their success in the generation of images and videos, whose dimensionality often soars in the realm of thousands. Therefore, they bring the potential of overcoming the previous limitation, taking the achievements of joint distribution models one step further.
\subsection{Score-Based Diffusion Models}
\subsubsection{Introduction and Notations} Score-based diffusion models are the generative models that made AI drawing and video generation a reality, thanks to pioneering work by Song, Dickstein and Kingma \cite{refone}, Song and Dhariwal \cite{reffour}, Ho, Jain, and Abbeel \cite{refeight}, Leobacher and Pillichshammer \cite{refthirteen}, and many others. For clarity, we present a summary of notations in \textbf{Table 1}. Each notation will also be explained upon its first appearance.
\begin{table}[h]
    \centering
    \caption{Mathematical Notations}
    \begin{tabular}{c|p{6.5cm}}
        \toprule
        Notation&Definition\\
        \midrule
        $d$&Dimensionality of the dataset\\
        $w$&Standard Brownian motion\\
        $\hat{w}$&Standard Wiener process in reverse time\\
        $\varphi_{\sigma^2}$&Distribution density function of $\mathcal{N}(0,\sigma^2)$\\
        $\tau_k$& Scaling operator $(\tau_k\circ g)(x)=k^dg(kx)$\\
        $\Phi$& Cumulative density function of $\mathcal{N}(0,1)$\\        
        $p_0(x)$&The original data distribution we seek to reconstruct\\
        $k_{\alpha,\beta}(t)$& $e^{\int_0^t\alpha(s)ds}$, abbreviated as $k(t)$\\
        $\sigma^2_{\alpha,\beta}(t)$&Function satisfying $\frac{d}{dt}(\sigma^2(t))=\beta(t)-2\alpha(t)\sigma^2(t)$, $\sigma^2(0)=0$, abbreviated as $\sigma^2(t)$\\        
        $p_{t,\alpha,\beta}(x)$&Distribution derived by perturbing $p_0$ using the diffusion SDE 
        $dx=-\alpha(t)dt+\beta(t)dw$
        from time $0$ to $t$, abbreviated as $p_t(x)$\\
        $p_{0t,\alpha,\beta}(x|x_0)$&Distribution derived by perturbing $\delta_{x_0}$, a distribution concentrated on $x_0$, using the above SDE from time 0 to $t$, abbreviated as $p_{0t}(x|x_0)$.\\    
        $q(x)$&Distribution reconstructed by the GMM predictor.\\
        $q_{t,\alpha,\beta}(x)$&Distribution derived by perturbing $q_0=q$ using the above SDE from time 0 to $t$, abbreviated as $q_{t}(x)$.\\
        $s_\theta(x,t)$&Neural network, often used to approximate $\nabla \log p_t(x)$, with $\theta$ the learnable parameter.\\
        $\epsilon$&Early stopping time implemented upon training $s_\theta(x,t)$, see \textbf{3.3} for why it is needed.\\      $\tilde{s}_\theta(x,t)$&$s_\theta(x,t+\epsilon)$\\
        \bottomrule
    \end{tabular}
    \label{tab:fullwidth}
\end{table}

\subsubsection{Forward and Backward process}Given a set of data points satisfying an unknown distribution $p_0$ in the sample space $\mathbb{R}^d$, score-based diffusion models attempt to learn and sample from $p_0$ in two steps: the \textbf{Forward Process} and the \textbf{Backward Process}\cite{refone}.

The \textbf{Forward Process} gradually injects white noise into the sample points to create Gaussian kernels around them, slowly turning $p_0$ into a smooth Gaussian distribution using the \textit{forward Stochastic Differential Equation (SDE)}
\begin{align}
\label{forward}
dx=-\alpha(t)xdt+\beta(t)dw,t\in [0,T],
\end{align}
where $w$ refers to the standard Brownian motion.

Two most common choices for $\alpha$ and $\beta$ are $\alpha(t)=0,\beta (t)=1$ (the VE scheme) and $\alpha(t)=1,\beta(t)=\sqrt{2}$ (the VP scheme)\cite{refone}. However, works including \cite{refthirtytwo} and \cite{reftwentyfour}  show that we can, in theory, choose any positive function for $\alpha$ and $\beta$ without fundamentally changing the model, as is formally stated by the theorem below. Therefore, we shall assume the VP scheme in all later discussions unless otherwise specified.
\begin{theorem} \footnote{See Appendix A for proof of all theorems}
\label{thm0}
Let $p_0$ be a probabilistic distribution and $\{p_t|t\in[0,T]\}$ be a family of distributions derived by perturbing $p_0$ using the SDE (\ref{forward}) from time 0 to time $T$. Let $k(t)=e^{\int_0^t\alpha(s)ds}$, and $\sigma^2(t)$ be the solution of the initial value Ordinary Differential Equation (ODE)
\begin{align*}
{d\sigma^2}/{dt}&=\beta(t)-2\alpha(t)\sigma^2(t)\\
\sigma^2(0)&=0,
\end{align*}
then, for each choice of $\alpha$ and $\beta$, it would hold that
\begin{align}
p_{t,\alpha,\beta}&=(\tau_{k(t)}\circ p_0)*\varphi_{\sigma^2(t)}\\
\nabla \log p_{t,\alpha,\beta}(x)&=k(t)\nabla \log p_{\sigma^2(t)k^2(t),0,1}(k(t)x),
\end{align}
where $\tau$ in $\mathbb{R}^d$ is the scaling operator $(\tau_k\circ f)(x)=k^df(kx)$ and $\varphi_{\sigma^2(t)}$ in $\mathbb{R}^d$ is the probability density function of $\mathcal{N}(0,\sigma^2(t)I_d)$.
\end{theorem}
The \textbf{Backward Process}, in turn, generates a sample point by first sampling a starting point 
$$\tilde{x}(T)\sim\mathcal{N}(0,\sigma^2(T)I_d)\approx p_{T,\alpha,\beta}(x),$$
then using it as an initial value to solve the \textit{reverse diffusion SDE} \cite{refone}
$$dx=-[\alpha(t)x+\beta^2(t)\nabla \log p_t(x)]dt+\beta(t)d\hat{w}$$
where $\hat{w}$ denotes a standard Wiener process in reverse time.

To approximate the unknown \textit{score function}, $\nabla \log p_t(x)$, diffusion models train a neural network $s_\theta(x,t)$ to minimize
\begin{align}
\label{errorformula}
\int_0^T\mathrm{E}_{p_{t,appr}(x)}[\beta^2(t)\left\|\nabla \log p_{t,appr}(x)-s_\theta (x,t)\right\|_2^2]dt
\end{align}
in a process called \textit{score matching}, with
\begin{align}
p_{t,appr}(x)=\frac{1}{N}\sum_{i=1}^N\mathcal{N}(\frac{x_i}{k(t)},\sigma^2(t)I_d)
\end{align}
being our empirical approximation for $p_t(x)$ from $N$ i.i.d. samples.

\subsubsection{Error Analysis and Likelihood Evaluation} Theorem \ref{estimation}, discovered by Song and Durkan\cite{reftwo}, provides a robust error analysis for diffusion models and a formula for estimating pointwise density values of the underlying distribution.
\begin{theorem}\cite{reftwo}
\label{estimation}
Let $s_\theta(x,t):\mathbb{R}^d\times\mathbb{R^+}\to \mathbb{R}^d$ be a neural network and $p_0$ be a probability distribution, and $p_{0,\theta}(x)$ to be the marginal distribution of $\tilde{x}(0)$ after $\tilde{x}(T)\sim\mathcal{N}(0,\sigma^2(T)I_d)$ is perturbed by the reverse diffusion SDE
$$dx=-[\alpha(t)x+\beta^2(t)s_\theta(x,t)]dt+\beta(t)d\hat{w}$$
from time $t$ to time $0$. Using the notation of Theorem \ref{thm0}, the KL divergence between $p_0$ and $p_\theta$ can be upper bounded by
\begin{align}
\label{KLBound}
D_{KL}\left(p_0\left(x\right),p_{0,\theta}\left(x\right)\right)\leq D_{KL}\left(p_{T,\alpha,\beta},\mathcal{N}\left(0,\sigma^2(T)I_d\right)\right) \notag \\
+\frac{1}{2}\int_0^T\mathrm{E}_{p_{t,\alpha,\beta}(x)}[\beta^2(t)\left\|\nabla \log p_{t,\alpha,\beta}(x)-s_\theta (x,t)\right\|_2^2]dt,
\end{align}
while the value of $\log p_\theta (x_0)$ can be lower bounded by
$$
\log p_{0,\theta}(x_0)\geq\mathrm{E}_{p_{0T}(x|x_0)}\log \varphi_{\sigma^2(T)}(x)+\int_0^T n\alpha(t)dt-\frac{1}{2}\int_0^T\beta^2(t)\cdot 
$$
\begin{align}
\label{density}
\mathrm{E}_{p_{0t(x|x_0)}}&[\left\|\nabla \log p_{0t}(x|x_0)-s_\theta(x,t)\right\|_2^2-\left\|\nabla \log p_{0t}(x|x_0)\right\|_2^2]dt
\end{align}
with the equal sign holding in both inequalities if there exists a probability distribution $q$ such that $s_\theta(x,t)=\nabla\log q_t(x), \forall t \in [0,T]$. 
\end{theorem}

Theorem \ref{estimation} lays the foundation for our model, allowing us to estimate a probability density function at any point, and hence conduct cardinality estimation, by training $s_\theta(x,t)$ and integrating it across time and sample space.

Our model, ADC, is composed of three key modules: the \textit{Score Estimator}, the \textit{Density Estimator}, and the \textit{Selectivity Estimator}. Sections 3 to 5 shall each be devoted to one of the estimators above.

\section{The Score Estimator}
\subsection{Model Choice}
The \textbf {score estimator} trains a neural network, $s_\theta(x,t)$, to approximate the score function $\{\nabla\log p_t(x)|t\in(0,T]\}$, thus providing the density estimator with the data it needs to calculate pointwise densities. Compared to training a network that directly outputs $\nabla\log p_t(x)$, works on score-based diffusion models usually find one of the following two methods much more efficient.

These approaches, named the \textbf{data prediction model} \cite{reftwenty}\cite{refthree} and the \textbf{noise prediction model} \cite{refthree}\cite{refone} respectively, are both inspired by a corollary of (\ref{forward})  that $x(t)\sim p_t(x)$ is the sum of two independent variables
$x_{sig}$ and $x_{noise}$, where $k(t)x_{sig}\sim p_0(x)$ and $\sigma(t)^{-1}x_{noise}\sim\mathcal{N}(0,I_d)$. Therefore, $\nabla\log p_t(x)$ equals
$$\frac{\int_{\mathbb{R}^d}[\tau_{k(t)}\circ p_0](x_{sig})\nabla\varphi_{\sigma^2(t)}(x-x_{sig})dx_{sig}}{\int_{\mathbb{R}^d}[\tau_{k(t)}\circ p_0](x_{sig})\varphi_{\sigma^2(t)}(x-x_{sig})dx_{sig}}=\mathrm{E}_{p_s(x_{sig}|x)}\frac{x_{sig}-x}{\sigma^2(t)},$$
where 
$$p_s(x_{sig}|x)=[\tau_{k(t)}\circ p_0](x_{sig})\varphi_{\sigma^2(t)}(x-x_{sig})$$
denotes the conditional distribution of $x_{sig}$ for a fixed $x$.

As a result, the \textbf{data prediction model} works by having $v_\theta(x,t)$ learn to match
$\mathrm{E}_{p_s(x_{sig}|x)}k(t)x_{sig}$, the normalized conditional expectation of $x_{sig}$ for a given $x$, and then output $$\nabla\log p_t(x)\approx s_\theta(x,t)=\frac{v_\theta(x,t)}{\sigma^2(t)k(t)}-\frac{x}{\sigma^2(t)}.$$
Similarly, defining 
$$p_n(x_{noise}|x)=[\tau_{k(t)}\circ p_0](x-x_{noise})\varphi_{\sigma^2(t)}(x_{noise}),$$
the \textbf{noise prediction model} has $v_\theta(x,t)$ learn to match $$\mathrm{E}_{p_n(x_{noise}|x)}\frac{x_{noise}}{\sigma(t)},$$
and then output $\nabla \log p_t(x)\approx s_\theta(x,t)=\frac{v_\theta(x,t)}{\sigma(t)}$.

Conventional works on diffusion models usually stick to one of the two designs throughout the score matching process. However, we can easily predict which design works better at a given time $t$. Indeed, as $x=x_{sig}+x_{noise}$, \textit{absolute error} from approximating $\mathrm{E}_{p_s(x_{sig}|x)}x_{sig}$ and $\mathrm{E}_{p_n(x_{sig}|x)}x_{noise}$ equally affect the error term $\left\|\nabla \log p_{t,approx}(x)-s_\theta (x,t)\right\|_2^2$ for any choice of $(x,t)$. Thus, predicting the \textit{smaller} of $x_{sig}$ and $x_{noise}$ would reduce the denominator for \textit{relative error}, thereby leading to a larger \textit{relative error} tolerance.

 A common property of the forward diffusion process is that the signal-to-noise ratio, ${\mathrm{E}(x_{sig})}/{\mathrm{E}(x_{noise})}$, decreases uniformly with $t$. Thus, we have ADC's score estimator break score matching into two parts: approximating $\nabla \log p_t(x)$ with a data prediction model $s_{\theta_2,tail}(x,t)$ when $t$ is above a threshold $T_{trunc}$, and a noise prediction model $s_{\theta_1,head}(x,t)$ when $t$ is below it. To the best of our knowledge, this is the first time such a practice has been adopted in the field of score-based diffusion models.

Many studies \cite{reftwentyseven} \cite{refnineteen} \cite{reffour} show that $p_{t,appr}(x)$, when constructed from i.i.d. samples, grows drastically larger and more volatile as $t\to 0$ and eventually explodes to infinity. As a result, $s_{\theta_1,head}$ must be somewhat heavy to produce moderately accurate estimations, but $s_{\theta_2,tail}$ can be very lightweight yet just as precise, especially since when $p_t$ gets sufficiently close to $\mathcal{N}(0,\sigma^2(t))$, the data prediction model's added term, 
$$-\frac{x}{\sigma^2(t)}=\nabla\log \varphi_{\sigma^2(t)}(x),$$
has already completed the bulk of score matching pretty decently. Therefore, our approach of implementing two separate models not only improves precision but can also speed up computation by calculating the score function with a heavy network only where it is really needed. Finally, the extra storage space cost incurred by training two separate models is quite minimal: On all benchmark datasets, the extra network used by the data prediction model, $s_{\theta_2,tail}$, only takes up roughly a third of $s_{\theta_1,head}$'s storage space.
\subsection{Model Architecture Improvement: QuadNet}
Due to difficulties encountered in training, we modify the noise prediction model based on the observation that the term $x_{noise}$ might or might not cancel itself out when calculating $\mathrm{E}_{p_n(x_{noise}|x)}\frac{x_{noise}}{\sigma(t)}$, making the growth pattern of $\nabla \log p_t(x)$ vary significantly depending on the smoothness of $p_0(x)$ instead of being fixed to a uniform $\mathcal{O}(\frac{1}{\sigma(t)})$, as is assumed by the traditional model design.

The theoretical foundation of our proposed modifications is summed up in the theorem below.
\begin{theorem}
\label{thm1}
Let $p_0$ be the weighted average of several positive distributions $\{p_i\}_{i=1}^n$, such that
$$\{V_i\}_{i=1}^n:=\{x|p_i(x)>0\}_{i=1}^n$$
are open domains in $\mathbb{R}^d$ intersecting only at their boundaries, and each $p_i$ is Lipschitz with constant $k_i$ in the interior (but not necessarily at the boundary) of $V_i$. Denote $V_0=\mathbb{R}^d\textbackslash\cup_{i=1}^n V_i$
for convenience, and let $\{p_t|t\in(0,T]\}$ be the family of distributions derived by perturbing $p_0$ according to the VE\cite{refone} scheme (results for the VP\cite{refone} scheme can be immediately derived from Theorem \ref{thm0}), then:
\begin{enumerate}
\item $\forall x\in V_i$ such that $B(x,r)\subseteq V_i$, 
$$\nabla \log p_t(x_0)=\nabla \log p_0(x_0)+\delta_1(x_0,t),$$
with $\delta_1(x_0,t)$ an infinitesimal term of order $O(\sigma(t))$.
\item $\forall x \in \partial V_i\cap\partial V_j$ with at most one of $i$ and $j$ being zero, if $B(x_0,r)\subseteq V_i\cup V_j$, $p_{0,i}(x_0)+p_{0,j}(x_0)>0$, and inside $B(x_0,r)$, 
$$\mathrm{d}\left(y,T_{x_0}(\partial V_i\cap \partial V_j)\right)\leq h\left\|y-x_0\right\|_2^2$$ holds for some constant $h$ and all $y\in\partial V_i\cap\partial V_j$, then $\forall \lambda$, $\nabla \log p_t(x_0+\lambda\sqrt{t}\vec{n})$ would equal
$$
\frac{e^{-\frac{\lambda^2}{2}}(p_{0,i}(x_0)-p_{0,j}(x_0))\vec{n}}{2\sigma(t)\left(\Phi(\lambda)p_{0,i}(x_0)+(1-\Phi(\lambda)p_{0,j}(x_0))\right)}+\delta_\sigma(x_0,\lambda,t),
$$
with $\vec{n}$ the unit vector normal to $T_{x_0}(\partial V_i\cap\partial V_j)$, pointing from $V_j$ to $V_i$, and $\delta_\sigma(x_0,\lambda,t)$ a residual term of order $O(1)$.
\item $\forall x\in V_0$, suppose $y_0\in \bar{V_i}$ is the point closest to $x_0$ such that $y_0\in\cup_{i=1}^k\bar{V_i}$. If $P(x\in B_r(y)|x\sim p_0)>M_1r^k, \forall 0<r<R$ for some constants $M_1, k_1$, while
$$\left\|z-x_0\right\|_2^2-\left\|y_0-x_0\right\|_2^2\geq \min\{M_2,k_2\left\|z-y_0\right\|_2^2\}$$
for all $z\in\cup_{i=1}^kV_i$ and some constants $M_2,k_2$, then
$$\nabla\log p_t(x_0)=\frac{y_0-x_0}{\sigma^2(t)}+\delta_{\sigma^2}(x_0,t),$$
with $\delta_{\sigma^2}(x_0,t)$ a term of order $o(\frac{1}{\sigma(t)^{1+\epsilon}})$ for all $\epsilon>0$.
\end{enumerate}
\end{theorem}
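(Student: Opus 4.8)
My plan is to reduce all three parts to a single Laplace‑type analysis of the Gaussian convolution coming from identity (1). Under the VE scheme ($k(t)\equiv 1$, $\sigma^2(t)=t$) that identity says $p_t=p_0*\varphi_t$, so
\[
\nabla\log p_t(x) \;=\; \frac{\int p_0(y)\,\frac{y-x}{t}\,\varphi_t(x-y)\,dy}{\int p_0(y)\,\varphi_t(x-y)\,dy},
\]
and after the substitution $y=x+\sqrt t\,\zeta$ this becomes $\tfrac1{\sqrt t}$ times the ratio of the $\zeta$‑averages of $p_0(x+\sqrt t\,\zeta)$ against $\zeta\,e^{-\|\zeta\|^2/2}$ and against $e^{-\|\zeta\|^2/2}$. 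First I would observe that the contribution of $\mathbb R^d\setminus B(x,r)$ to either integral is $O(e^{-c/t})$, hence negligible against every power of $t$; everything then hinges on the local structure of $p_0$ inside $B(x,r)$, and the residual order claimed in each of (1)--(3) is read off from the \emph{first discarded term} in the corresponding local expansion.

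\emph{Part (1).} Inside $B(x_0,r)\subseteq V_i$ the density is Lipschitz, so $p_0(x_0+\sqrt t\,\zeta)=p_0(x_0)+\sqrt t\,\nabla p_0(x_0)\!\cdot\!\zeta+E$ with $|E|$ dominated by $k_i t\|\zeta\|^2$ (differentiability of $p_0$ at $x_0$, implicit in the statement, is what makes the first‑order term legitimate, while the Lipschitz constant supplies the Gaussian domination). The odd first moment annihilates $p_0(x_0)$ in the numerator, the identity $\int\zeta\zeta^\top e^{-\|\zeta\|^2/2}d\zeta=(2\pi)^{d/2}I$ turns the next term into $\sqrt t\,\nabla p_0(x_0)(2\pi)^{d/2}$, and dividing by the denominator $(2\pi)^{d/2}p_0(x_0)+O(\sqrt t)$ yields $\tfrac1{\sqrt t}\bigl(\sqrt t\,\nabla\log p_0(x_0)+O(t)\bigr)=\nabla\log p_0(x_0)+O(\sqrt t)$; the $O(\sqrt t)=O(\sigma(t))$ error is exactly that first discarded term.

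\emph{Part (2)} is the geometric heart. Choosing local coordinates $\zeta=\eta\vec n+\xi$ with $\xi$ in the tangent hyperplane $T_{x_0}(\partial V_i\cap\partial V_j)$, the flatness hypothesis makes the $i$--$j$ interface inside $B(x_0,r)$ a graph $\eta=\phi(\xi)$ with $|\phi(\xi)|\le h\|\xi\|^2$, which after the $\sqrt t$‑rescaling becomes $\eta=O(\sqrt t\,\|\xi\|^2)\to 0$; combined with continuity of $p_{0,i},p_{0,j}$ and the assumption $B(x_0,r)\subseteq V_i\cup V_j$ (which lets us ignore $V_0$), this gives $p_0(x_0+\sqrt t\,\zeta)\to p_{0,i}(x_0)\mathbf 1\{\eta>0\}+p_{0,j}(x_0)\mathbf 1\{\eta<0\}$, the errors from flattening the boundary and from freezing $p_{0,i},p_{0,j}$ at $x_0$ being $O(\sqrt t)$ by Lipschitzness. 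Evaluating at $x_0+\lambda\sqrt t\,\vec n$ shifts the $\eta$‑Gaussian to be centered at $\lambda$; integrating out $\xi$ leaves a standard $(d-1)$‑dimensional Gaussian, so all tangential components of the numerator vanish to leading order by oddness, and a short one‑dimensional computation over $\{\eta>0\},\{\eta<0\}$ produces a denominator proportional to $\Phi(\lambda)p_{0,i}(x_0)+(1-\Phi(\lambda))p_{0,j}(x_0)$ (bounded away from $0$ by the non‑degeneracy hypothesis $p_{0,i}(x_0)+p_{0,j}(x_0)>0$) and a numerator proportional to $e^{-\lambda^2/2}(p_{0,i}(x_0)-p_{0,j}(x_0))\vec n$; the prefactor $\tfrac1{\sqrt t}=\tfrac1{\sigma(t)}$ then gives the stated leading term. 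Every dropped contribution — the Gaussian tail, the quadratic deviation of the boundary, the Lipschitz variation of $p_{0,i},p_{0,j}$, the surviving tangential piece — enters numerator and denominator at relative order $O(\sqrt t)$, hence contributes $O(1)$ to $\nabla\log p_t$, which is the residual $\delta_\sigma=O(1)$. I expect the main obstacle to be exactly this last step: showing, uniformly in the rescaled $\zeta$, that the (up to two) boundary pieces passing through $B(x_0,r)$ together with their quadratic flatness errors produce only an $O(1)$ correction and never a second $O(1/\sigma)$ term.

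\emph{Part (3).} Here $x_0$ sits at distance $\rho=\|x_0-y_0\|>0$ from $\mathrm{supp}\,p_0$; I would factor $\varphi_t(x_0-y)=(2\pi t)^{-d/2}e^{-\rho^2/(2t)}e^{-(\|x_0-y\|^2-\rho^2)/(2t)}$, cancel $e^{-\rho^2/(2t)}$ from numerator and denominator, and split $y-x_0=(y_0-x_0)+(y-y_0)$; this pulls out exactly $\tfrac{y_0-x_0}{t}=\tfrac{y_0-x_0}{\sigma^2(t)}$ and leaves $\tfrac1t$ times the mean of $y-y_0$ under the tilted measure $p_0(y)e^{-(\|x_0-y\|^2-\rho^2)/(2t)}dy$. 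The separation hypothesis $\|z-x_0\|^2-\rho^2\ge\min\{M_2,k_2\|z-y_0\|^2\}$ forces this tilt to suppress scales $\|y-y_0\|\gg\sqrt t$ at least Gaussianly, while the mass bound $P(B_r(y_0))>M_1r^{k}$ keeps the normalizer from being super‑polynomially small; a layer‑cake estimate then bounds the tilted mean of $\|y-y_0\|$ by $o(t^{(1-\epsilon)/2})$ for every $\epsilon>0$, so the remainder is $o(t^{-(1+\epsilon)/2})=o(\sigma(t)^{-1-\epsilon})$. Squeezing the sharp power of $t$ out of a merely polynomial mass bound is the delicate point here, though more routine than the geometry of Part (2). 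Finally, the VP‑scheme versions of (1)--(3) follow verbatim from the rescaling identity (2).
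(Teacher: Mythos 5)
Your proposal takes essentially the same route as the paper's own proof: a near/far decomposition (you use a ball where the paper uses a cube, which is immaterial), Taylor expansion with a quadratic remainder bound after the Gaussian odd moment kills the constant term in part (1), a step-function-plus-boundary-deviation-plus-Lipschitz-variation decomposition across the flattened interface in part (2), and a tilted-measure / local-mass argument after factoring out the dominant $e^{-\|x_0-y_0\|^2/(2t)}$ in part (3). Your flag that part (1) silently requires $C^{1,1}$ regularity (not mere Lipschitzness of $p_i$) is exactly the hypothesis the paper's appendix restatement of Theorem~3.1(1) adds, and your identified obstacle in part (2) — controlling all discarded pieces to relative order $O(\sqrt t)$ so they only contribute $O(1)$ — is precisely what the paper's $p_{C_r}+\Delta p+\epsilon_p$ estimates carry out.
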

Such estimates inspire us to present Quadnet, a new network structure. That is, for a $d$-dimensional dataset, we have our network $s_{\theta_1,head}$ generate a vector $(v_{\sigma^2},v_\sigma,v_1)$ of length $3d$, and output
$$\nabla\log p_t(x)\approx \frac{v_{\sigma^2}}{\sigma^2(t)}+\frac{v_\sigma}{\sigma(t)}+v_1,$$
so that each module effectively captures the behavior of $\nabla \log p_t(x)$ near one type of region, while correcting the residual error term near the more "explosive" regions.

To illustrate the advantage of Quadnet, we define a toy dataset as of \textbf{Figure 2}, and have a Quadnet of shape $[3,10,25,25,10,6]$ compete against three compare group networks of shape $[3,10,25,25,10,$ $6,2]$ in performing score matching for small $t$. Compared to its competitors that possess only the quadratic (${1}/{\sigma^2(t)}$), linear (${1}/{\sigma(t)}$, i.e. a traditional noise prediction model), or constant (i.e. one that directly approximates $\nabla \log p_t(x)$) scaling module, Quadnet converges faster, settles to a lower error, and becomes less prone to falling into saddle points than all of its competitors. 
\begin{figure}[h]
  \label{Figure 2}
  \centering
  \includegraphics[width=\linewidth]{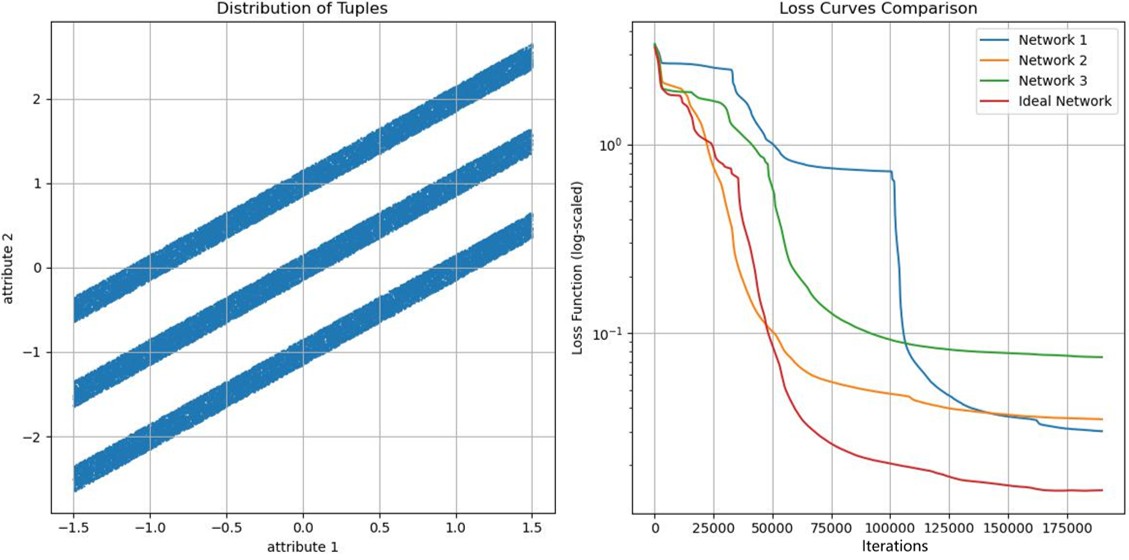}
  \caption{The 2d toy dataset for testing (left) and the loss curve of different score matching network architectures (right). "Ideal network" (red) refers to Quadnet while Network 1 (blue), 2 (yellow), and 3 (green) possess only the quadratic, linear, and constant scaling module, respectively.}
  \Description{Left: a scatterplot showing all data points in the test dataset. Right: the loss curve of Quadnet and three compare group networks when conducting score matching on the test dataset, with x-axis depicting the number of iterations and y-axis depicting the current loss function (log scaled.)}
\end{figure}
\subsection{Choosing Early Stopping Time}
Despite the enhancements above, it is widely observed that the score function would still grow to a point beyond the approximation ability of \textit{any} neural networks when $t$ becomes too close to zero. Score-based diffusion models used for image generation mitigate this problem by implementing early stopping \cite{refnineteen}\cite{reffour}: that is,  training the score matching model (i.e., calculating the score approximation loss) and running the reverse diffusion SDE solver on the interval $(\epsilon,T]$ rather than $(0,T]$, and treating $x(\epsilon)$ rather than $x(0)$ as the final sample.

The score estimator in ADC is trained using the same early stopping approach, with $\epsilon$, the early stopping time, selected among values $\{\frac{1}{1280},\frac{1}{640},\frac{1}{320},\frac{1}{160}\}$ using a trial-and-error approach (starting from $\frac{1}{1280}$ and increasing $\epsilon$ if the MSE loss proves too big). We notice that the best choice for $\epsilon$ increases with dimensionality and volume of the data distribution's typical set, but more work is still needed to determine the best choice for $\epsilon$ ahead of model construction.

\section{The Density Estimator}
\subsection{Model Design}
Relying on $s_\theta(x,t)\approx\nabla\log p_t(x)$, as input, the density estimator calculates pointwise joint densities using the formula (\ref{density}), in which we denote 
$$\left\|\nabla \log p_{0t}(x|x_0)-s_\theta(x,t)\right\|_2^2-\left\|\nabla \log p_{0t}(x|x_0)\right\|_2^2$$
as $g_{x_0}(x,t,s_\theta)$ for convenience. In the formula (\ref{density}), 
$$f(x_0,\alpha,\beta,T)=\mathrm{E}_{p_{0T}(x|x_0)}\log \varphi_{\sigma^2(T)}(x)+\int_0^T n\alpha(t)dt$$ 
can be calculated analytically, thus leaving the estimation of
$$\frac{1}{2}\int_0^T\beta^2(t)\mathrm{E}_{p_{0t(x|x_0)}}g_{x_0}(x,t,s_\theta)dt$$
to be the bulk of our work.

Again, due to early stopping and score blowup, our network $s_\theta(x,t)$ is not (and cannot be) trained to produce the value of $\nabla \log p_\delta(x)$ for all $\delta<\epsilon$, potentially leading to wildly inaccurate representations. There are three ways to deal with this: we can evaluate the integral while skipping the $[0,\epsilon)$ interval, we can assume $s_\theta(x,\delta)=s_\theta(x,\epsilon)$ for all $\delta<\epsilon$, or we can define $\tilde{s}_\theta(x,t)=s_\theta(x,t+\epsilon)$, $\tilde{\alpha}(t)=\alpha(t+\epsilon)$, $\tilde{\beta}=\beta(t+\epsilon)$ and calculate 
$$\log p_{\epsilon,\theta}(x_0)\approx f(x_0,\tilde{\alpha},\tilde{\beta},T)+\int_0^T\tilde{\beta}^2(t)\mathrm{E}_{p_{0t(x|x_0)}}g_{x_0}(x,t,\tilde{s}_\theta)dt,$$
which would be an approximation of $$p_\epsilon(x)=[(\tau_{k(\epsilon)}\circ p_0)*\varphi_{\sigma^2(t)}](x)$$
more numerically stable.

ADC chooses the third approach despite the added perturbation, both because it's easier to analyze theoretically, and because the first two approaches seem to introduce arbitrary spikes in the approximated joint distribution function that significantly reduce the accuracy of the current selectivity estimator, designed using a predictor-corrector importance sampling algorithm. Whether the first two approaches might work for a different selectivity estimator design can be a topic of future research.

The above formula requires us to integrate $g_{x_0}$ across both sample space (when calculating $\mathrm{E}_{p_{0t(x|x_0)}}g_{x_0}(x,t,\tilde{s}_\theta)$) and time. Here, time comes at a much lower dimension than space. What's more, we generally know much more about how the norm of $\nabla \log p_t(x)$ changes with respect to time than how it does with respect to space, as the latter greatly depends on the underlying distribution. 

To account for such, we present a hybrid integration approach: for integration across time, ADC implements an adaptive stepsize Midpoint Rule Integrator, which is more suited for low-dimensional functions and can be easily modified to capitalize our knowledge of the score function's behavior; for calculating the expectation across space, ADC implements a Quasi Monte Carlo \cite{reftwentyone} integrator much more robust against the curse of dimensionality.\\
Summing it up, for our chosen VP perturbation scheme where $\alpha(t)=1$ and $\beta(t)=\sqrt{2}$ are set to constants, ADC approximates 
$$\int_0^T\mathrm{E}_{p_{0t}(x|x_0)}g_{x_0}(x,t,\tilde{s}_\theta)dt$$
using
\begin{align}
\label{QMC}
\sum_{i=1}^N\frac{(t_i-t_{i-1})}{2^k}\Big[\sum_{j=1}^{2^k}g_{x_0}\Big(\sigma(\tilde{t}_i)\Phi^{-1}\big(z_j+y(\tilde{t}_i)\big)+\frac{x_0}{k(\tilde{t}_i)},\tilde{t_i},\tilde{s}_\theta\Big)\Big],
\end{align}
where $0=t_0<t_1<...<t_{N-1}<t_N=T$ is a partition of $[0,T]$, $\tilde{t}_i={(t_{i-1}+t_i)}/{2}$; $\{z_j\}_{j=1}^{2^k}$ is a Sobol sequence of length $2^k$ in the unit cube; $\{y(\tilde{t}_i)\}_{i=1}^N$ are $N$ random samples from $\mathrm{U}([0,1]^d)$, used to perturb $\{z_j\}$ so that the integrator will not focus too much on any specific region; $\Phi$ is the distribution function of the standard normal distribution, acting pointwise on each coordinate; and the sum between $z_j$ and $y(\tilde{t}_i)$ is taken in $\mathrm{T}^n$ rather than $\mathbb{R}^n$ to prevent out-of-domain inputs for $\Phi^{-1}$.
\subsection{Integrand Choice}
Based on the data prediction model, we further propose modifying the expression of $g_{x_0}(x,t,\tilde{s}_\theta)$ for $t>T_{trunc}$ in a way that drastically reduces integration error and allows for the use of much larger time-steps.

Our idea is based on the following observations. First, for large $t$, $s_\theta(x,t)$ would quickly approach $-\frac{x}{\sigma^2(t)}$. Second, the data prediction model is already using $\frac{v_\theta(x,t)}{\sigma^2(t)k(t)}$ to approximate the residual term $\nabla \log p_t(x)+\frac{x}{\sigma^2(t)}$ rather than $\nabla \log p_t(x)$ itself. Most importantly,  
$$\mathrm{E}_{p_{0t}(x|x_0)}\big[\left\|\nabla \log p_{0t}(x|x_0)+\big(x/\sigma^2(t)\big)\right\|_2^2-\left\|\nabla \log p_{0t}(x|x_0)\right\|_2^2\big]$$
is exactly calculated as $$\frac{\beta^2(t)(n\sigma^2(t)k^2(t)+\left\|x_0\right\|_2^2)}{\sigma^4(t)k^2(t)},$$ which can be integrated analytically on any interval whenever $\alpha(t)$ and $\beta(t)$ are constants.

With the above in mind, for all $t>T_{trunc}$, ADC decomposes $g_{x_0}(x,t,\tilde{s}_\theta)$ into the terms
$$\frac{1}{\sigma^4(t)k^2(t)}\langle\tilde{v}_{\theta}(x,t),\tilde{v}_{\theta}(x,t)-2x_0\rangle-\langle\frac{x}{\sigma^2(t)},\frac{x}{\sigma^2(t)}-\frac{2x_0}{\sigma^2(t)k(t)}\rangle,$$
leaves the second term for exact calculation, and only approximates the first term, denoted from now on as $g_{res,x_0}(x,t,\tilde{v}_\theta)$, using hybrid numerical integration.

This modification is meaningful in that as $t$ gets larger, $g_{x_0}(x,t,\tilde{s}_\theta)$, the original integrand, would approach a nonzero value that varies notably with $x$, but $g_{res,x_0}(x,t,\tilde{v}_\theta)$ converges to zero at a rate no slower than $\mathcal{O}(\frac{1}{\sigma^4(t)k^2(t)})$, i.e., $\mathcal{O}(e^{-2t})$ under the VP scheme. As a result, ADC is allowed to implement much larger time steps for even moderately large $t$, significantly boosting its efficiency.
\begin{algorithm}[H]
\caption{Set Time-step Scheme for Density Estimator}
\label{Algorithm1}
\begin{algorithmic}[1] 
\Require Time threshold $T_{trunc}$, early stopping time $\epsilon$, baseline step-size $\delta_{head}$, $\delta_{tail}$ 
\Ensure Time-step scheme $\{t_i\}_{i=0}^N$
\State Set $s_0\gets T_{trunc}$, $i\gets 0$
\While {$s_i>0$} 
    \State $s_{i-1}\gets t_i-\delta_{head}\cdot\frac{\sigma(s_i)\sigma(s_i+\epsilon)}{\sigma(T_{trunc})\sigma(T_{trunc}+\epsilon)}$, $i\gets i-1$
\EndWhile
\State $s_i\gets 0$, $N_1 \gets i$
\State $i\gets 0$
\While {$s_i<T$} 
    \State $s_{i+1}\gets s_i+\delta_{tail}\cdot \frac{\sigma^4(T_{trunc})k^2(T_{trunc})}{\sigma^4(s_i)k^2(s_i)}$, $i\gets i+1$
\EndWhile
\State $s_i \gets T$, $N\gets i-N_1$
\State $\{t_i\}_{i=0}^N \gets \{s_{i-N_1}\}_{i=0}^N$
\State \Return $\{t_i\}_{i=0}^N$
\end{algorithmic}
\end{algorithm}
\subsection{Parameter Choice}
We discuss our rationale for choosing two parameters: $T$, the integration upper bound, currently set to 3; and $\{t_i\}_{i=0}^N$, the timestep scheme, currently chosen via the pseudocode in \textbf{Algorithm 1}.

To choose $T$, we consider the following theorem that provides a lower bound for the speed at which $p_t$ converges to a known Gaussian distribution $\mathcal{N}(0,\sigma^2(t)I_d)$.
\begin{theorem}
Let $\tilde{x}$ be a random variable taking values $\{x_i\}_{1\leq i\leq n}$ with equal probability (as is \textbf{always} the case when we construct $p_t$ from i.i.d. sample points), such that $\frac{1}{n}\sum_{i=1}^nx_i=0$. Take $p_t(x)$ to be the marginal distribution of $x(t)$ when $x(0)=\tilde{x}$ is perturbed by the diffusion SDE
$$dx=-\alpha(t)xdt+\beta(t)dw$$
from time 0 to time $t$. Then, we would have 
\begin{enumerate}
\item
$\mathrm{E}_{p_t(x)}\left\|\nabla\log p_t(x)+\frac{x}{\sigma^2(t)}\right\|_2^2\leq \frac{f(t,\tilde{x})}{\sigma^4(t)k^2(t)}.$
\item
$\mathrm{E}_{p_t(x)}\left\|\nabla \log p_t(x)+\frac{x}{\sigma^2(t)}\right\|_2^2\leq \frac{f(t,\tilde{x})g(t,\tilde{x})}{\sigma^4(t)k^2(t)}.$
\end{enumerate}
where
$$f(t,\tilde{x})=\frac{1}{n}\sum_{i=1}^n\left\|x_i\right\|_2^2$$
and
$$g(t,\tilde{x})=\frac{1}{n}\sum_{i=1}^n(e^{\frac{2\left\|x_i\right\|_2^2+f(t,\tilde{x})}{2\sigma^2(t)k^2(t)}}-1).$$
\end{theorem}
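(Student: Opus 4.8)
The plan is to reduce both bounds to a single estimate on $\mathrm{E}_{p_t(x)}\|\bar\mu(x)\|_2^2$, where $\bar\mu$ is a Gaussian-mixture posterior mean. By formula (1), $p_t=(\tau_{k(t)}\circ p_0)*\varphi_{\sigma^2(t)}$ with $p_0=\frac{1}{n}\sum_i\delta_{x_i}$, so $p_t(x)=\frac{1}{n}\sum_{i=1}^n\varphi_{\sigma^2(t)}(x-\mu_i)$ is a uniform mixture of isotropic Gaussians with centers $\mu_i:=x_i/k(t)$, and $\frac{1}{n}\sum_i x_i=0$ gives $\frac{1}{n}\sum_i\mu_i=0$. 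A one-line differentiation (Tweedie's identity) yields
$$\nabla\log p_t(x)+\frac{x}{\sigma^2(t)}=\frac{\bar\mu(x)}{\sigma^2(t)},\qquad\bar\mu(x):=\frac{\sum_i\mu_i\,\varphi_{\sigma^2(t)}(x-\mu_i)}{\sum_j\varphi_{\sigma^2(t)}(x-\mu_j)},$$
so the left-hand side of both claims equals $\sigma^{-4}(t)\,\mathrm{E}_{p_t(x)}\|\bar\mu(x)\|_2^2$. Taking $\mu$ uniform on $\{\mu_i\}$ and $x\mid\mu\sim\mathcal{N}(\mu,\sigma^2(t)I_d)$ makes $x\sim p_t$ and $\bar\mu(x)=\mathrm{E}[\mu\mid x]$, so conditional Jensen gives $\mathrm{E}_{p_t(x)}\|\bar\mu(x)\|_2^2\le\mathrm{E}\|\mu\|_2^2=f(t,\tilde x)/k^2(t)$, which is claim 1. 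Write $F:=f(t,\tilde x)/k^2(t)$; all the work is in sharpening this for claim 2.

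For claim 2, first pull $\varphi_{\sigma^2(t)}(x)$ out of every component: with $e^{\psi_i(x)}:=\varphi_{\sigma^2(t)}(x-\mu_i)/\varphi_{\sigma^2(t)}(x)$, and using $\sum_i\mu_i=0$ to replace $e^{\psi_i}$ by $e^{\psi_i}-1$ in the numerator,
$$\mathrm{E}_{p_t(x)}\|\bar\mu(x)\|_2^2=\frac{1}{n}\int\varphi_{\sigma^2(t)}(x)\,\frac{\left\|\sum_i\mu_i\,(e^{\psi_i(x)}-1)\right\|_2^2}{\sum_k e^{\psi_k(x)}}\,dx.$$
Next bound the denominator below by AM--GM, $\sum_k e^{\psi_k(x)}\ge n\,e^{\bar\psi(x)}$ with $\bar\psi(x)=\frac{1}{n}\sum_k\psi_k(x)$; since $\psi_k(x)=\langle x,\mu_k\rangle/\sigma^2(t)-\|\mu_k\|_2^2/(2\sigma^2(t))$ and $\sum_k\mu_k=0$, the term linear in $x$ cancels and $\bar\psi(x)\equiv-F/(2\sigma^2(t))$, which is exactly where the factor $e^{F/(2\sigma^2(t))}$ hidden in $g$ comes from. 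Expanding the square and using the elementary identity $\int\varphi_{\sigma^2(t)}(x)\,(e^{\psi_i(x)}-1)(e^{\psi_j(x)}-1)\,dx=e^{\langle\mu_i,\mu_j\rangle/\sigma^2(t)}-1$ (a short Gaussian computation from $\int\varphi_{\sigma^2(t)}e^{\psi_i}=1$ and $\int\varphi_{\sigma^2(t)}e^{\psi_i+\psi_j}=e^{\langle\mu_i,\mu_j\rangle/\sigma^2(t)}$), this reduces the problem to
$$\mathrm{E}_{p_t(x)}\|\bar\mu(x)\|_2^2\le\frac{e^{F/(2\sigma^2(t))}}{n^2}\sum_{i,j}\langle\mu_i,\mu_j\rangle\big(e^{\langle\mu_i,\mu_j\rangle/\sigma^2(t)}-1\big).$$

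It remains to bound the double sum. The scalar map $\phi(s)=s\,(e^{s/\sigma^2(t)}-1)$ is nonnegative on $\mathbb{R}$, increasing on $[0,\infty)$, and satisfies $\phi(s)\le\phi(|s|)$ (equivalently $1-e^{-u}\le e^{u}-1$), so each summand is $\le\phi(\|\mu_i\|_2\|\mu_j\|_2)$. Then I decouple the two indices by Taylor expanding $e^{s/\sigma^2(t)}-1$ and applying Cauchy--Schwarz coefficient by coefficient, $\big(\sum_i\|\mu_i\|_2^{k+1}\big)^2\le\big(\sum_i\|\mu_i\|_2^2\big)\big(\sum_i\|\mu_i\|_2^{2k}\big)$, which gives $\sum_{i,j}\phi(\|\mu_i\|_2\|\mu_j\|_2)\le\big(\sum_i\|\mu_i\|_2^2\big)\sum_j\big(e^{\|\mu_j\|_2^2/\sigma^2(t)}-1\big)$. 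Putting this together,
$$\mathrm{E}_{p_t(x)}\|\bar\mu(x)\|_2^2\le e^{F/(2\sigma^2(t))}F\cdot\frac{1}{n}\sum_j\big(e^{\|\mu_j\|_2^2/\sigma^2(t)}-1\big)\le F\cdot\frac{1}{n}\sum_j\big(e^{(\|\mu_j\|_2^2+F/2)/\sigma^2(t)}-1\big)=F\,g(t,\tilde x),$$
the middle inequality because $e^{F/(2\sigma^2(t))}\ge 1$; dividing by $\sigma^4(t)$ and recalling $\mu_j=x_j/k(t)$, $F=f(t,\tilde x)/k^2(t)$ gives claim 2.

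The step I expect to be the real obstacle is this last one: a naive term-by-term bound on $\sum_{i,j}\phi(\|\mu_i\|_2\|\mu_j\|_2)$ runs the wrong way, since by Chebyshev's sum inequality $\frac{1}{n}\sum_i\|\mu_i\|_2^2(e^{\|\mu_i\|_2^2/\sigma^2(t)}-1)$ already exceeds the product of the two separate averages, so the decoupling has to be carried out inside the power series, where Cauchy--Schwarz applies to each coefficient. One must also verify that the AM--GM bound on the denominator loses nothing beyond the $e^{F/(2\sigma^2(t))}$ already present in $g$, and that coincident or zero centers $\mu_i$ cause no difficulty, every expression above being a finite sum with no vanishing denominators.
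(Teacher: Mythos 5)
Your proof is correct, and Part 1 is essentially the paper's argument (Jensen against the mixture posterior weights, phrased as conditional Jensen on the latent center). Part 2, however, reaches the same bound by a genuinely different route, and it is worth noting how the two differ.

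Both proofs use the same centering device (insert a constant into the numerator, legal because $\sum_i x_i=0$) and the same AM--GM lower bound $\frac{1}{n}\sum_j\mathcal{N}_j\ge(\prod_j\mathcal{N}_j)^{1/n}$, but they deploy them in opposite orders. The paper applies Cauchy--Schwarz to the index $i$ \emph{before} taking any expectation: writing the error vector as $\frac{\frac1n\sum_i(\mathcal{N}_i-k)\,x_i/\sigma^2}{\frac1n\sum_j\mathcal{N}_j}$ and splitting off $\frac1n\sum_i\|x_i/\sigma^2\|_2^2$, which instantly produces the factor $f/\sigma^4$ and leaves the scalar quantity $\int\frac1n\sum_i\mathcal{N}_i^2/p_t-1$ (a $\chi^2$-divergence) to be bounded by AM--GM on $p_t$ and one Gaussian integral. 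You instead apply AM--GM on the denominator first, compute the remaining $n\times n$ cross integrals $\int\varphi(e^{\psi_i}-1)(e^{\psi_j}-1)=e^{\langle\mu_i,\mu_j\rangle/\sigma^2}-1$ in closed form, and then do the real work of decoupling the bilinear form $\sum_{i,j}\langle\mu_i,\mu_j\rangle(e^{\langle\mu_i,\mu_j\rangle/\sigma^2}-1)$ via a term-by-term Cauchy--Schwarz inside the Taylor expansion of $\phi(s)=s(e^{s/\sigma^2}-1)$. Your diagnosis of where the subtlety lies is exactly right: a blunt product bound on the double sum goes the wrong way, and the coefficient-wise argument is what makes it work. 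The paper's order of operations is shorter because Cauchy--Schwarz on the vectors eliminates the $i$--$j$ coupling in one line; your route is more explicit and, as a bonus, the intermediate bound
$e^{F/(2\sigma^2)}F\cdot\frac1n\sum_j\bigl(e^{\|\mu_j\|_2^2/\sigma^2}-1\bigr)$
is slightly sharper than $Fg(t,\tilde x)$, since $e^{c}(e^{a}-1)\le e^{a+c}-1$ with equality only at $c=0$; you then relax to match the stated form.

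One cosmetic point: the paper is actually proving the bound under the VE scheme and invoking the equivalence of perturbation schemes (its Claim in Appendix B) to transfer to general $(\alpha,\beta)$. Your proof works directly with $\mu_i=x_i/k(t)$, which is equivalent but means the $k(t)$ factors appear throughout instead of being handled once at the end; it would be worth a sentence flagging that this is the scaling being invoked.
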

The above theorem tells us that under the condition $\mathrm{E}_{p_0(x)}x=0$, predicting $\nabla \log p_t(x)$ using the score function of a normal distribution results in an error term that decays at a speed of roughly $\mathcal{O}(\frac{1}{\sigma^4(t)k^2(t)})$ for small $t$, and roughly $\mathcal{O}(\frac{1}{\sigma^6(t)k^4(t)})$ for large $t$, with the transition guaranteed to happen when 
$$\frac{2\left\|x_i\right\|_2^2+\mathrm{E}_{p_0(x)}\left\|x\right\|_2^2}{2\sigma^2(t)k(t)}\leq \log 2$$
holds for all $\{x_i\}_{i=1}^n$. Meanwhile, one can deduce from (\ref{KLBound}) that
$$D_{KL}(p_T(x),\mathcal{N}(0,\sigma^2(T)))=\int_T^\infty\mathrm{E}_{p_t(x)}\left\|\nabla\log p_t(x)+\frac{x}{\sigma^2(t)}\right\|_2^2dt,$$
so setting $T$ slightly beyond this transition point might be the most cost-effective approach.

In the case of ADC, our normalization process ensures $\mathrm{E}_{p_0(x)}=0$, and each attribute's $\max$ and $\min$ value differ by 3.2. As a result, under the VP scheme, setting $T=3$ would suffice for all datasets of moderate dimensionality.

To choose the timestep scheme, we first perform an error analysis on our hybrid integrator. For an arbitrary timestep partition scheme $\{t_i\}_{i=0}^N$, define the time-related discretization error $\delta_{i,time}(x_0)$ as
$$\int_{t_i}^{t_{i+1}}[\mathrm{E}_{p_{0t}(x|x_0)}g_{x_0}]dt-\Delta t_i\mathrm{E}_{p_{0\tilde{t}}(x|x_0)}g_{x_0}(x,\tilde{t}_i,\tilde{s}_\theta),$$
and the space-related discretization error $\delta_{t,space}(x_0,y)$ as
$$\mathrm{E}_{p_{0t}(x|x_0)}g_{x_0}(x,t,\tilde{s}_\theta)-\frac{1}{2^k}\sum_{j=1}^{2^k}g_{x_0}\big(\sigma(t)\Phi^{-1}\big(z_j+y\big)+\frac{x_0}{k(t)},t,\tilde{s}_\theta\big).$$
Given the curse of dimensionality and that the volatile $g(x,t,s_\theta)$ was already "smoothed up" somewhat upon calculating $\mathrm{E}_{p_{0t}(x|x_0)}g_{x_0}$, one can naturally expect $\{\delta_{i,time}(x_0)\}_{i=1}^N$ to be significantly smaller than $\{\delta_{\tilde{t}_i,space}(x_0,y(\tilde{t_i}))\}_{i=1}^N$ with a dense enough timestep scheme, in which case we have the following theorem.
\begin{theorem}
Define $\mathrm{Var}(x_0)$ to be the error term's variance upon approximating $\log p_{\epsilon,\theta}(x_0)$ with formula (\ref{QMC}). Assuming that 
\begin{enumerate}
\item The partition is dense enough that for the same $y$, fluctuations of $\Delta t_i\delta_{t,space}(x_0,y)$ for different $t$ in $[t_{i-1},t_{i}]$ is negligible compared to the total integration error,
\item The error terms $\{\delta_{i,time}(x_0)\}_{i=1}^N$ are negligible compared to $\{\delta_{\tilde{t}_i,space}(x_0,y(\tilde{t_i}))\}_{i=1}^N$, and
\item The sequence $\{y(\tilde{t}_i)\}_{i=1}^N$ consists of i.i.d. samples from $\mathrm{U}([0,1]^d)$
\end{enumerate}
hold for all $x_0$, $\mathrm{E}_{x_0\sim p_0}\mathrm{Var}(x_0)$ would be minimized under a timestep scheme where $t_i-t_{i-1}$ is chosen inversely proportionate to $$\big(\mathrm{E}_{x_0\sim p_0}[\mathrm{Var}_{y}\delta_{\tilde{t}_i,space}(x_0,y)]\big)^{\frac{1}{2}}.$$
\end{theorem}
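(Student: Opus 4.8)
The plan is to fix a data point $x_0$, expand the error of formula (6) into one contribution per time subinterval, note that only the quasi-Monte Carlo shifts are random, and then minimize the resulting variance over all partitions of $[0,T]$ into a fixed number $N$ of steps (which is the implicit constraint here: with $N$ free the variance can be driven to zero at unbounded cost).

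First I would subtract formula (6) from the exact value $\int_0^T\mathrm{E}_{p_{0t}(x|x_0)}g_{x_0}\,dt$. Using the definitions of $\delta_{i,time}$ and $\delta_{t,space}$, the $i$-th interval contributes $\delta_{i,time}(x_0)+\Delta t_i\,\delta_{\tilde{t}_i,space}\bigl(x_0,y(\tilde{t}_i)\bigr)$, so the total error equals $\sum_i\delta_{i,time}(x_0)+\sum_i\Delta t_i\,\delta_{\tilde{t}_i,space}\bigl(x_0,y(\tilde{t}_i)\bigr)$. The time-discretization terms are deterministic once the partition is fixed and hence contribute nothing to $\mathrm{Var}(x_0)$; assumption (2) moreover makes them negligible, so no accuracy is lost by optimizing the variance alone. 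Since the shifts $\{y(\tilde{t}_i)\}$ are i.i.d.\ (assumption (3)) and $\delta_{\tilde{t}_i,space}(x_0,\cdot)$ depends on the $i$-th shift only, all cross-covariances vanish and $\mathrm{Var}(x_0)=\sum_i(\Delta t_i)^2\,\mathrm{Var}_y\,\delta_{\tilde{t}_i,space}(x_0,y)$; here assumption (1) is precisely what lets the midpoint value stand in for the whole interval, so this identity is insensitive to the exact evaluation node. (As an aside, the random shift makes the randomized QMC estimator unbiased, so in fact $\mathrm{E}_y\,\delta_{t,space}(x_0,y)=0$.) Averaging over $x_0\sim p_0$ then gives $\mathrm{E}_{x_0}\mathrm{Var}(x_0)=\sum_i(\Delta t_i)^2 a(\tilde{t}_i)$, where $a(t):=\mathrm{E}_{x_0\sim p_0}\bigl[\mathrm{Var}_y\,\delta_{t,space}(x_0,y)\bigr]$ is now a fixed function of $t$ that no longer references the partition.

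Next I would pass to a continuum description. Encode the partition by its asymptotic node density $\rho$, normalized so that $\int_0^T\rho=1$, so an interval near $t$ has width $\Delta t\approx 1/(N\rho(t))$; assumption (1) (the partition is fine and $a$ varies slowly across an interval) turns the sum into a Riemann integral, $\mathrm{E}_{x_0}\mathrm{Var}(x_0)\approx\frac{1}{N}\int_0^T a(t)/\rho(t)\,dt$. It remains to minimize $\int_0^T a/\rho\,dt$ over densities with $\int_0^T\rho=1$, and Cauchy--Schwarz does this in one line: $\bigl(\int_0^T\sqrt{a}\,dt\bigr)^2=\bigl(\int_0^T\sqrt{\rho}\cdot\sqrt{a/\rho}\,dt\bigr)^2\le\bigl(\int_0^T\rho\,dt\bigr)\bigl(\int_0^T a/\rho\,dt\bigr)=\int_0^T a/\rho\,dt$, with equality exactly when $\rho\propto\sqrt{a}$. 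Hence the optimal scheme has node density $\propto\sqrt{a(t)}$, i.e.\ $t_i-t_{i-1}\propto 1/\sqrt{a(\tilde{t}_i)}=\bigl(\mathrm{E}_{x_0\sim p_0}[\mathrm{Var}_y\,\delta_{\tilde{t}_i,space}(x_0,y)]\bigr)^{-1/2}$, which is the claim. The same answer falls out of a Lagrange-multiplier computation on $\sum_i(\Delta t_i)^2 a(\tilde{t}_i)$ under $\sum_i\Delta t_i=T$ with $N$ fixed, provided one keeps in mind that perturbing a node moves $\tilde{t}_i$ and hence $a(\tilde{t}_i)$ --- it is this coupling that upgrades the naive $\Delta t_i\propto 1/a(\tilde{t}_i)$ to the self-consistent $\Delta t_i\propto 1/\sqrt{a(\tilde{t}_i)}$.

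I expect the hard part to be not the optimization --- Cauchy--Schwarz closes it immediately --- but the soft-analysis bookkeeping behind the reduction: invoking assumption (1) to ensure $\delta_{t,space}$, and therefore $a$, is slowly varying on each interval so that the midpoint evaluation is representative and the node-density/Riemann-sum approximation is legitimate; invoking (2)--(3) to drop the deterministic time bias and annihilate the cross-covariances; and, for a fully rigorous statement, checking that the discrete minimizer of $\sum_i(\Delta t_i)^2 a(\tilde{t}_i)$ converges to the continuum minimizer as $N\to\infty$. Under the paper's standing regularity hypotheses on $s_\theta$, $p_0$, and the perturbation kernels these should all be routine, so the proof is short once the reduction is set up.
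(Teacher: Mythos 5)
Your proof is correct and follows essentially the same route as the paper: decompose the error into a deterministic time-discretization bias plus a random spatial-QMC part, use assumptions (2)--(3) to drop the bias and diagonalize the covariance, pass to a continuum step-size/density description under assumption (1), and optimize subject to the fixed-$N$ constraint. The only cosmetic difference is that you close the optimization with a one-line Cauchy--Schwarz argument where the paper simply invokes calculus of variations; both yield $\Delta t_i \propto a(\tilde t_i)^{-1/2}$.
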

One issue yet unresolved is that $\mathrm{E}_{x_0\sim p_0}[\mathrm{Var}_{y}\delta_{\tilde{t}_i,space}(x_0,y)]$ proves very hard to estimate. As a result, the current version of ADC (somewhat boldly) deems the term proportionate to $$\mathrm{E_{x_0\sim p_0}\mathrm{E}_{p_{0t(x|x_0)}}}\big(g_{x_0}(x,t,\tilde{s}_\theta)\big)^2$$
when $t<T_{trunc}$ and 
$$\mathrm{E_{x_0\sim p_0}\mathrm{E}_{p_{0t(x|x_0)}}}\big(g_{res,x_0}(x,t,\tilde{s}_\theta)\big)^2$$
when $t>T_{trunc}$, leaving the task of more accurately estimating  $\mathrm{E}_{x_0\sim p_0}[\mathrm{Var}_{y}\delta_{\tilde{t}_i,space}(x_0,y)]$ for upcoming research.

The decay rate of the second term is estimated to be no slower than ${O}({1}/(\sigma^8(t)k^4(t)))$ when $t\to\infty$, as we already discussed. For the first term, Theorem \ref{thm1} proves that for all $x_0$ such that $p_0(x_0)\neq 0$, $\mathrm{E}_{p_{0t}(x|x_0)}\left\|\nabla\log p_{t+\epsilon}(x)\right\|_2^4$ can be upper bounded by
$$ 8\mathrm{E}_{p_{0t}(x|x_0)}\left\|\frac{x-x_0}{\sigma^2(t+\epsilon)}\right\|_2^4+8\mathrm{E}_{p_{0t}(x|x_0)}\left\|\delta_{\sigma^2}(x,t+\epsilon)\right\|_2^4,$$
which, using the bounds on $\delta_{\sigma^2}(x,t)$, can be roughly estimated as proportionate to $\frac{1}{\sigma^4(t+\epsilon)}$. Therefore, Cauchy's Inequality proves
$$\mathrm{E}_{p_{0t}(x|x_0)}(\langle\nabla\log p_{t+\epsilon}(x),\nabla\log p_{t+\epsilon}(x)-2\nabla\log p_{0t}(x|x_0)\rangle)^2$$
to grow at a rate no faster than ${O}({1}/({\sigma^2(t)\sigma^2(t+\epsilon)}))$ as $t\to 0$, thus justifying our pseudocode implementation.
\section{The Selectivity Estimator}
\subsection{Predictor Choice: Why GMM}
While the score estimator and density estimator are effective in calculating the probability density function's value at any given point, we still need to integrate this function over the entire queried region in order to estimate the selectivity of ranged queries. ADC does this by utilizing the importance sampling Monte Carlo formula (\ref{MC}) that Naru \cite{refsix} and DQM-D\cite{refseven}'s selectivity estimator are both based on, which, when $\mathrm{E}_{q(x|x\in V)}\frac{p(x)}{q(x)}$ is approximated using i.i.d. sampling, becomes an unbiased estimator with variance proportionate to 
$$\Big(\int_Vq(x)dx\Big)^2\mathrm{Var}_{x\sim g(x| V)}\Big(\frac{p(x)}{q(x)}\Big).$$
To enhance speed and precision, the selectivity estimator of ADC must construct a predictor that can be rapidly integrated, rapidly sampled from, and resembles the corrector $p_{\epsilon,\theta}$ as much as possible. However, the predictors constructed by Naru (via progressive sampling) and DQM-D (via VEGAS \cite{refeleven}) require multiple phases of sampling. This is undesirable for us, considering that evaluating the density function at one point already requires a not-so-easy numerical integration process.

Gaussian Mixture Models (GMMs) attempt to reconstruct a distribution $q$ using the expression
\begin{align}
\label{GMM}
q=\sum_{i=1}^Nw_i\mathcal{N}(y_i,H_i),
\end{align}
and ADC's choice to train one as its predictor is born out of the very same considerations listed above. 

In terms of speed, the advantage of GMM manifests in that for $\{H_i\}_{i=1}^N$ diagonal and $V=\{(x_1,..,x_d)|a_j\leq x_j\leq b_j, \forall j\}$ a hyper-rectangle in $\mathbb{R}^d$, $\int_Vq(x)dx$ can be calculated as 
$$\int_Vq(x)dx=\sum_{i=1}^Nw_i\prod_{j=1}^d\left[\Phi(h_{jj}(b_j-y_j))-\Phi(h_{jj}(y_j-a_j))\right],$$
and conditional samples $\{x_j\sim q(x|V)\}$ can be drawn using $d+1$ random numbers $z_i\sim U[0,1]$ per sample, with $z_{n+1}$ determining which kernel $\mathcal{N}(y_j,H_j)$ $x$ belongs to, and $z_i$ determining the $i$th coordinate of $x$ to be 
$$x_i=\Phi^{-1}\big(z_i\Phi(h_{ii}(b_i-y_{j,i}))+(1-z_i)\Phi(h_{ii}(y_{j,i}-a_i))\big).$$
As a result, when $V$ is a hyperrectangle, sampling and integration can both be performed with a time cost independent of $V$'s diameter and comparable to that of calculating a pointwise density value.

In terms of similarity to $p$, another advantage of GMM is that the underlying distribution $p_\epsilon$, which $p_{\epsilon,\theta}$ seeks to approximate, always has a KDE\cite{reffourteen}-type expression
$$p_\epsilon (x)=\frac{1}{M}\sum_{i=1}^M\mathcal{N}(k(\epsilon)^{-1}(x_i),\sigma^2(\epsilon)),$$
as in the database case, $p_0$ is simply the average of delta functions placed at sample points. Given their unique shape and smoothness features, we can reasonably assume that Gaussian mixtures work best at matching local features of Gaussian mixtures.

Instead of a query-based training approach, ADC trains its GMM predictor to maximize the log-likelihood function
\begin{align}
\label{Likelihood}
\frac{1}{M}\sum_{j=1}^M\log q(x_j).
\end{align}
using the EM algorithm \cite{refthirtyseven}, an elegant GMM-exclusive optimizer that trains GMMs by iteratively alternating between estimating the posterior responsibilities of mixture components (E-step) and updating the model parameters to maximize the expected complete-data log-likelihood (M-step). Compared to one-size-fits-all optimizers like Stochastic Gradient Descent (SGD), EM is very robust and efficient, as it requires no hyperparameters, guarantees a monotone increase of likelihood, and always converges to a local maximum value.

An important detail of our GMM implementation is based on the following observations: first, $q$, when trained to maximize the function (\ref{Likelihood}), is in fact trying to approximate $p_0=\frac{1}{M}\sum_{i=1}^M\delta(x_i)$ itself rather than the perturbed $p_\epsilon$. Second, perturbing GMMs using a diffusion process is extremely easy: setting $q_0(x)$ to be the $q(x)$ defined in (\ref{GMM}), we would simply have 
$$q_t=\sum_{i=1}^Nw_i\mathcal{N}\Big(\frac{y_i}{k(t)},\frac{H_i}{k(t)}+\sigma^2(t)I_d\Big).$$
With this in mind, ADC's selectivity estimator assumes that 
$$\frac{\int_Vp_{0,\theta}(x)dx}{\int_Vp_{\epsilon,\theta}(x)dx}\approx\frac{\int_Vq_0(x)dx}{\int_Vq_\epsilon(x)dx},$$
and then corrects for the early stopping error by actually calculating
$$\int_Vp_{0,\theta}(x)dx\approx\int_Vq_0(x)dx\cdot\mathrm{E}_{x\sim q_\epsilon(x|V)}\Big(\frac{p_{\epsilon,\theta}(x)}{q_\epsilon(x)}\Big).$$
While a somewhat bold approach, numerical experiments have proven such a design crucial for ADC's optimal performance. 
\subsection{A GMM-Bayesnet Hybrid}
A common flaw shared by ADC's predictor and corrector is that GMMs and diffusion models both work by "bleeding" portions of density from the typical set onto neighboring regions. When one or more pairs of attributes $(A_i,A_j)$ exhibit near-functional dependency, hence the typical set's projection onto $(A_i,A_j)$ becomes a 1-dimensional curve, such a practice significantly distorts the local probability distribution, leading to large errors for extreme-case queries.

Thus, ADC further introduces Bayesnet \cite{refsixteen} to handle functionally dependent attribute pairs. In the case where knowing the value of an attribute $A_i$ can help narrow down the range of another attribute $A_j$ to $c$ (chosen to be 0.05 in the current model) times the original, ADC, instead of training a GMM and a diffusion model on all attributes, would instead work by training:
\begin{itemize}
\item a diffusion model $p_\epsilon$ and a GMM $q$ modeling the distribution of all attributes excluding $A_j$
\item a Bayesnet histogram modeling the conditional probability density $\tilde{p}(A_j|A_i)$ as a function piecewise constant with respect to $A_i$ and $A_j$.
\end{itemize}
Now, given a query asking for all points inside $V=\prod_{k=1}^d[a_k,b_k]$, ADC predicts its selectivity using 
$$\sum_{m=1}^Nw_m\left(\prod_{k\neq{i,j}}\int_{a_k}^{b_k}\varphi_{H_{m,kk}}(x_k-y_{m,k})dx_k\cdot g_i(m,a_j,b_j)\right),$$
with $g_i(m,a_j,b_j)$ defined as 
$$\int_{a_i}^{b_i}\varphi_{H_{m,ii}}(x_i-y_{m,i})\tilde{p}(y_j\in[a_j,b_j]|x_i)dx_i,$$
then corrects its prediction using the weighed average of $\{\frac{p_\epsilon(z_r)}{q_\epsilon(z_r)}\}_{r=1}^n$, with $\{z_r\}_{r=1}^n$ i.i.d. samples from $q_\epsilon$ and the weight of each sample point set proportionate to
$$\{\tilde{p}(x_j\in [a_j,b_j]|z_{r,i})\}_{r=1}^n.$$
Because $\tilde{p}$, constructed using histograms, is constant with respect to $x_i$ in each of the intervals $\{[x_{i,s},x_{i,s+1}]\}_{s=0}^{S-1}$, with $S$ the number of partition intervals for the $i$th attribute's range, the above formulas can be evaluated much faster by calculating and caching, for the dimension $i$, each slice $[x_{i,s},x_{i,s+1}]$, and each Gaussian kernel $m$,
$$\int_{x_{i,s}}^{x_{i,s+1}}\varphi_{H_{m,ii}}(z_{m,i}-x_i)dx_i$$
once upon initializing the model, as well as calculating 
$$\{\tilde{p}(y_j\in[a_j,b_j]|x_i\in[x_{i,s},x_{i,s+1}])\}_{s=0}^{S-1}
$$
once upon receiving a query and using the values for every Gaussian kernel.

For our benchmark datasets, the above Bayesnet implementation worked very well on \textbf{power} and \textbf{higgs}, successfully cutting down the max Q-error of queries by more than five times, as the range of the 0th attribute can be narrowed down to an average of 4.5\% of its original by knowing the 4th attribute's value.

\subsection{Introducing ADC+}

We briefly discuss ADC+, another GMM-based addition aimed at enhancing ADC's speed and accuracy on large-volume, high-selectivity queries.

In numerical experiments, we observe that the GMM predictor itself can handle such "easy" queries quite competently, often resulting in a level of precision that rivals or even exceeds the entire prediction-sampling-correction algorithm. Thus, we further upgrade ADC by training a decision tree classifier $\mathrm{T}: \mathbb{R}^2\to \{0,1\}$ and implementing the pseudocode in \textbf{Algorithm 2} (for conciseness, we depict the general case where Bayesnet is not used; in case it is, the algorithm can be easily modified by replacing the above formulas with those in \textbf{5.2}).

ADC+ trains $T$ to minimize the square of log Q-error. That is, given a set of sample queries, we use gini impurity as our loss function and set the weight of each query to $|(\log Q_{ADC})^2-(\log Q_{GMM})^2|$, where $Q_{ADC}$ and $Q_{GMM}$ refers to the Q-error on that query when estimated using the entire ADC algorithm or only the GMM predictor, respectively. To prevent overfitting and reduce storage costs, the decision tree is designed to always have a maximum depth of 4.

As the advantages of ADC+ are more easily verified experimentally instead of theoretically, more information regarding our motives behind creating ADC+, as well as the performance gains resulting from it, shall be provided in subsection \textbf{6.3}.
\begin{algorithm}[t!]
\caption{ADC+}
\label{Algorithm2}
\begin{algorithmic}[1] 
\Require Relation $R$ with attributes $\{A_i\}_{i=1}^d$ and tuples distributed according to $p_0$, GMM predictor $q$, score estimator $s_\theta(x,t)\approx \nabla\log p_t(x)$, density estimator $p_{\theta,\epsilon}$, classifier $\mathrm{T}$, sample number $n$, queried region $V=\prod_{i=1}^d[a_i,b_i]$.
\Ensure $Sel \approx \int_V p_0(x)dx$
\State $\mathrm{vol}\gets \prod_{i=1}^d\frac{b_i-a_i}{\mathrm{max}\{A_i\}-\mathrm{min}\{A_i\}}$
\State $Q\gets \int_Vq(x)dx$
\If {$\mathrm{T}(\log Q, \log \mathrm{vol})=0$} 
\State $Sel \gets Q$
\Else
\State Sample $\{x_r\}_{r=1}^n\sim q_\epsilon(x|x\in V)$
\State $\{y_r\}_{r=1}^n\gets\{\frac{p_{s_\theta,\epsilon}(x_r)}{q_\epsilon(x_r)}\}_{r=1}^n$
\State $Sel \gets Q\cdot \mathrm{mean}\{y_r\}_{r=1}^n$
\EndIf
\State \Return $Sel$
\end{algorithmic}
\end{algorithm}
\section{Experiments}
\subsection{Experimental Setup}
\subsubsection{Datasets}
We test the performance of ADC on three real-world and one synthetic dataset, including:

\textbf{Forest}\cite{reftwentythree}: Forest coverage type dataset with 581012 rows, used as a benchmark by multiple influential works on cardinality estimation \cite{refeighteen} \cite{reffourteen} \cite{refseventeen} \cite{reffive}. Following the practice of \cite{reffive}, we keep the first 10 continuous attributes for evaluation.

\textbf{Power}\cite{reftwentythree}: Dataset with 2.1 million rows, recording power consumption information of a French household at a 1-minute sampling rate, used alongside \textbf{forest} in all four essays above. Following the practice of \cite{reffive}, we keep the 7 continuous measurement attributes excluding date and time, and convert "?" to -1 when dealing with the 1.25\% of rows consisting entirely of missing values. 

\textbf{Higgs}\cite{reftwentythree}: A dataset with 11 million rows and 28 continuous attributes, recording the kinematic properties of microscopic particles measured by detectors in an accelerator, used as a cardinality estimation benchmark by \cite{refthirtyfive} and used by \cite{refthirtysix} to train their summarization model in identifying common frequency and correlation patterns. Following the practice of \cite{refthirtyfive}, we use the 7 high-level features to construct our test dataset \footnote{We round all values in higgs to three-digit decimals because Naru, due to the encoding strategy chosen by \cite{reffive}, can take weeks for training and minutes for inference when given the original data. We note that ADC (which lacks an encoder) and most other models can train and infer just as smoothly on the original dataset.}.

\textbf{Modulo}: A synthetic dataset designed according to the philosophy that leads to our creation of ADC: treating all attributes as a single, unified entity is at times necessary for producing decent estimates. The dataset consists of 4 million rows and five attributes $A$,$B$,$C$,$D$, and $E$, where $A$, $B$, and $D$ are independent integers taking the values 0 to 1999 with equal probability; 
$$C=(A+B+\epsilon_1)\bmod 2000, E=(A+D+\epsilon_2)\bmod 2000,$$ 
and $\epsilon_1$, $\epsilon_2$ are independent integers taking the values 0 to 199 with equal probability. A property of \textbf{modulo} is that all five attributes are \textit{pairwise independent} from one another, making it challenging for estimators that work by discerning pairwise correlations.

We note that \textbf{modulo} is artificially and specifically designed to highlight a correlation pattern that conventional estimators fail to capture, and the results on this dataset should best be interpreted as "We found a rare but solid loophole that no existing model, except for ours, managed to address", rather than "ADC can outperform other models in real-world scenarios as much as it did here".
\subsubsection{Workloads}
For each dataset, we create 100,000 queries for training query-driven methods, 10,000 queries for validation, and 10,000 queries for testing using the program provided by \cite{reffive}. That is, for a $d$-dimensional dataset, we design queries asking for all points lying within a hyperrectangle, where:
\begin{itemize}
\item The hyperrectangle works as a filter for $k$ attributes and spans the entire range of all others, with $k$ a random integer uniformly selected between 1 and $d$.
\item The center of the hyperrectangle has a 90\% chance of being randomly sampled from the dataset (data-entric), and a 10\% chance of being randomly sampled from a uniform distribution spanning the range of each attribute (random-centric)
\item For each attribute with a non-trivial range bound, the query range width on that attribute has a 50\% chance of being sampled from $\mathrm{U}([0,R_i])$ and a 50\% chance of being sampled from $\mathrm{Exp(\frac{10}{R_i})}$, with $R_i$ denoting the difference between $\max$ and $\min$ values of that attribute.
\end{itemize}
\subsubsection{Comparative Techniques}
We briefly introduce versions of our model and the models we use for comparison, which include all learned estimators tested in \cite{reffive}.

For the sake of giving each model a similar size budget while preventing poor results due to bad parameter choices, comparative techniques are trained using the same hyperparameter settings the author of \cite{reffive} chose for the dataset \textbf{forest}, putting the parameter size of each model at around 0.5MB per dataset except Naru on \textbf{higgs} (due to its expensive encoding scheme).
\begin{itemize}
\item \textbf{ADC}: Our estimator ADC in its raw form, always going through the prediction-sampling-correction procedure for every query.

\item \textbf{ADC+}: ADC in its most optimal form, after implementing the changes proposed in \textbf{5.3}.

\item \textbf{ADC-}: The performance of ADC's selectivity estimator as a standalone model, which we report to quantify accuracy improvements gained by the diffusion model corrector, i.e., the score estimator and density estimator.

\item \textbf{MSCN} \cite{refseventeen}: a query-driven regression model which processes the feature vector of queries using a multi-set convolutional network. 

\item \textbf{Lw-Tree} and \textbf{LW-NN} \cite{refeighteen}: two lightweight query-driven regression models that process the feature vector of queries using either a gradient boosted tree (LW-Tree) or a neural network (LW-NN), and enhance their precision with the help of CE (correlation based) features extracted from simple heuristic estimators.

\item \textbf{DeepDB} \cite{refthirtythree}: a data-driven joint distribution model that reconstructs data distributions using sum-product networks.

\item \textbf{Naru} \cite{refsix}: a data-driven joint distribution model that reconstructs data distributions by training an autoregressive model and answers ranged queries using sequential sampling, shown by \cite{reffive} to be the most robust and accurate learned model on most test datasets.
\end{itemize}
\subsubsection{Experiment Device and Accuracy Metric}
We run all of our experiments on a server equipped with dual-socket Intel Xeon Gold 6140 systems, each Intel Xeon Gold 6140 having 36 physical cores (72 hardware threads) and a base clock frequency of 2.30 GHz. We choose Q-error, defined as
$$\max\{\frac{Card_{real}}{Card_{est}},\frac{Card_{est}}{Card_{real}}\}$$
to be our accuracy metric, with zero values for $Card_{real}$ and $Card_{est}$ both set to 1 to prevent division by zero. We report $\max$, 99th percentile, 95th percentile, median, and geometric mean Q-error as our evaluation benchmarks in accordance with \cite{reffive}.

We note that the performance of LW-Tree/LW-NN on \textbf{forest} and MSCN, Naru, DeepDB on \textbf{power} are not as good in our experiments as they were in \cite{reffive}'s, possibly due to different Python library versions and our smaller size budget for the "power" dataset \footnote{In a personal communication, professor Xiaoying Wang, the original author of \cite{reffive}, confirmed that such deviations are plausible and likely attributable to stochastic differences arising from software library versions. They also said that in their original experiments, their GM Q-error results were 1.32 for LW-Tree and 1.35 for LW-NN}. While we record in \textbf{Table 2} the results from our experiments for methodological consistency, we also encourage readers to look up the original figures in \cite{reffive} for a more comprehensive comparison.
\subsection{Experimental Results}
\begin{table*}[t!]
  \caption{Comprehensive Experimental Data.}
  \label{tab:all_data}
  \centering
  \begin{subtable}{\textwidth}
    \centering
    \caption{Q-Error of Different Estimators on Real-world Datasets}
    \label{tab:q_error_all}
    \begin{tabularx}{0.95\textwidth}{lccccccccccccccc}
        \toprule
        Models&
        \multicolumn{5}{c}{Forest} & \multicolumn{5}{c}{Power} & \multicolumn{5}{c}{Higgs} \\     &GM&50th&95th&99th&max&GM&50th&95th&99th&max&GM&50th&95th&99th&max\\
        \midrule
        MSCN&1.72&1.17&8.00&19.5&\textbf{138}&1.26&1.02&5.00&19.9&392&1.20&1.01&2.99&14.8&199 \\
        DeepDB&1.35&\textbf{1.06}&4.25&13.0&3408&1.09&\textbf{1.00}&1.53&3.88&1870&1.06&1.00&1.28&2.77&866\\
        LW-NN&1.51&1.22&4.37&14.33&8306&1.17&1.06&1.89&4.25&$2.2\cdot 10^4$&1.10&1.02&1.50&3.57&$2.6\cdot 10^4$\\
        LW-Tree&1.64&1.33&5.50&13.90&3176&1.13&1.02&1.75&4.33&$2.6\cdot 10^4$&1.09&1.01&1.52&4.17&928\\
        Naru&\textbf{1.27}&\textbf{1.06}&3.21&8.00&452&1.05&1.01&\textbf{1.19}&\textbf{2.00}&3174&1.04&\textbf{1.00}&\textbf{1.11}&2.06&258\\
        ADC- (ours)&1.35&1.11&4.00&8.00&1978&1.05&\textbf{1.00}&1.23&2.24&190&\textbf{1.03}&\textbf{1.00}&1.13&\textbf{1.86}&143\\
        ADC (ours)&1.32&1.16&\textbf{3.00}&\textbf{6.00}&426&1.07&1.03&1.27&\textbf{2.00}&\textbf{128}&1.05&1.01&1.14&2.00&\textbf{78}\\
        ADC+ (ours)&1.28&1.10&\textbf{3.00}&\textbf{6.00}&426&\textbf{1.04}&\textbf{1.00}&1.22&\textbf{2.00}&\textbf{128}&\textbf{1.03}&\textbf{1.00}&1.13&1.96&\textbf{78}\\
        \bottomrule
    \end{tabularx}
  \end{subtable}
  \\
  \vspace{2mm}
  \begin{subtable}{0.33\textwidth}
    \centering
    \caption{Q-error of Different Estimators on Modulo}
    \label{tab:q_error_modulo}
    \begin{tabular}{lccccc}
        \toprule
        Models&\multicolumn{5}{c}{Modulo} \\     &GM&50th&95th&99th&max\\
        \midrule
        
        MSCN&1.74&1.10&16.5&48.0&176\\
        DeepDB&1.55&1.03&9.00&23.0&10503\\
        LW-NN&1.14&1.06&1.54&3.00&919\\
        LW-Tree&1.27&1.07&2.36&17.0&10318\\
        Naru&1.80&\textbf{1.01}&20.57&51.38&30235\\
        ADC-&1.11&\textbf{1.01}&1.74&3.12&70\\
        ADC&1.12&1.07&1.45&\textbf{2.06}&\textbf{40}\\
        ADC+&\textbf{1.08}&\textbf{1.01}&\textbf{1.44}&\textbf{2.06}&\textbf{40}\\
        \bottomrule
    \end{tabular}
  \end{subtable}
  \hfill 
  \begin{subtable}{0.635\textwidth}
    \centering
    \caption{Latency (L,ms), Training Time (T,min), and Storage Cost (S,MB)}
    \label{tab:time_space_info}
    \begin{tabular}{lccccccccccc}
      \toprule
       \multicolumn{3}{c}{Forest} & \multicolumn{3}{c}{Power} & \multicolumn{3}{c}{Higgs} & \multicolumn{3}{c}{Modulo} \\
      L&T&S&L&T&S&L&T&S&L&T&S\\
      \midrule
      0.49&20&0.61&0.36&20&0.54&0.37&20&0.54&0.46&20&0.50\\
      14&12&0.56&4.8&5.7&0.54&4.4&6.9&1.5&4.3&6.2&0.67\\
      0.26&102&0.43&0.33&103&0.43&0.27&102&0.43&0.33&104&0.42\\
      0.12&0.18&0.65\footnote{The open-source code of \cite{reffive} did not report LW-Tree's parameter size like it did with the other four models; thus, the parameter size of LW-Tree is estimated under the assumption that LW-Tree's full storage cost/parameter size ratio is roughly identical to that of LW-NN.}&0.12&0.15&0.60&0.12&0.18&0.57&0.12&0.16&0.59\\
      51&49&0.65&26&116&0.38&155&1770&1.6&34&156&0.43\\
      3.0&600&0.11&4.9&600&0.11&4.4&600&0.11&2.7&600&0.07\\
      34&600&0.44&23&600&0.33&23&600&0.33&16&600&0.22\\
      26&600&0.44&7.9&600&0.33&4.8&600&0.33&7.2&600&0.22\\

      \bottomrule
    \end{tabular}
  \end{subtable}
\end{table*}

\subsubsection{Accuracy Evaluation}
The results for all estimators are listed in \textbf{Table 2(a)} for real-world datasets and \textbf{Table 2(b)} for \textbf{modulo}, which we shall separately discuss.

For the real-world datasets, ADC+ consistently demonstrates top-tier performance, beating every model besides Naru on every benchmark except "max" and "median" on \textbf{forest}. Naru is the only model with performance rivaling ADC+'s, ranking above ADC+ on four benchmarks but falling slightly behind on nine others. Meanwhile, ADC performs almost identically to ADC+ on 95th and 99th Q-error, but falls somewhat behind on GM Q-error due to its higher median Q-error values. Considering that the estimator Naru comes with the downside of much higher latency, such results are enough to earn ADC+ a place among the best cardinality estimators.

For the dataset \textbf{modulo}, ADC's advantage scenario, the complex, multi-attribute correlations caused Naru, DeepDB, and MSCN to significantly drop in accuracy, while LW-Tree's performance exhibited less drastic declines. LW-NN is the only estimator to exhibit performance comparable to ADC on this dataset, almost catching up with ADC+ on 95th and 99th Q-error, but still falling significantly behind ADC+ on the benchmarks GM and max. Meanwhile, the GMM predictor of ADC, a close relative of the KDE algorithm, only exhibits barely satisfactory performance on 95th and 99th Q-error, while Bayesnet implemented via the Chow-Liu algorithm, which only captures pairwise correlations, would be incapable of learning anything more than single-attribute histograms, showing that this decline is not unique to learned estimators.

Compared to its standalone GMM predictor, ADC's diffusion model components result in significant accuracy gains for challenging queries and workloads, but generally fail to further enhance accuracy if the workload and/or query is already simple. That is, ADC significantly reduced max Q-error on all datasets, notably reduced 99th percentile Q-error on all datasets except for \textbf{higgs}, which is the most simple; reduced 95th percentile Q-error for the hardest datasets \textbf{forest} and \textbf{modulo}, but \textbf{increased} median Q-error on all four datasets tested. The reasons why this is the case, as well as what to do about this phenomenon, shall be discussed in subsection \textbf{6.3}.
\subsubsection{Latency, Storage Space, and Training Time}
To evaluate the practicality of ADC+ in actual user scenarios, we discuss the inference latency, training time, and storage requirements of ADC+ in comparison with other learned models.

Regarding per-query latency, the most important feature besides accuracy, ADC+ usually answers a query within the 7-10 milliseconds range on all datasets except \textbf{forest}. This is much slower than MSCN, LW-Tree and LW-NN, which answer a query within 0.5 ms (within 0.2ms for LW-Tree), and somewhat slower than DeepDB, but is also $2\times$ as fast as Naru on \textbf{forest} and \textbf{power} while more than $4\times$ as fast as Naru on \textbf{higgs} and \textbf{modulo}. Given that ADC+ is almost entirely written in Pytorch, it might also be possible to further speed up ADC+ with the help of GPUs, which, sadly, we do not have access to.

In terms of storage space, ADC+ is at an advantage compared to all other models, with parameters taking up 0.44MB of space for \textbf{forest} and less than 0.35MB for all other datasets. To be more precise, the network $s_{\theta_1,head}$ usually takes up 125 to 200 KB of storage, $s_{\theta_2,tail}$ around 50 KB, the GMM predictor usually takes up around $10d$ KB with $d$ the dataset's dimensionality, the Bayesnet component, if used, usually takes up around 40KB, and other artifacts, including the decision tree, some database information, and a histogram we use for single-attribute queries, usually take up around 20 KB. Meanwhile, all comparative techniques have at least 0.42MB of trainable parameters (except Naru on \textbf{power}), and the actual number often goes further beyond that. We note that other learned models can be made more lightweight if needed, though often at a cost. For example, in their original essay, LW-Tree and LW-NN \cite{refeighteen} conducted their experiments using models as small as 16KB, but, as a result, measured significantly higher Q-errors on \textbf{forest}, \textbf{power}, and \textbf{higgs} than we did.

ADC is rather at a disadvantage when it comes to training, with its total training time amounting to 10 hours and only forcibly exiting from training then due to meeting our time constraint. While Naru and LW-NN can also take several hours to train, which is especially true for large datasets, training usually takes as little as 20 minutes for MSCN, 10 minutes for DeepDB, and is completed within one minute for LW-Tree. However, we also point out that the training process of ADC, which heavily relies on Pytorch, can be greatly sped up with the help of a GPU (which we don't have access to), and we are currently working on enhanced score matching methods to cut down training time while producing just as accurate results.
\begin{figure}[h]
  \label{Figure 2}
  \centering
  \includegraphics[width=\linewidth]{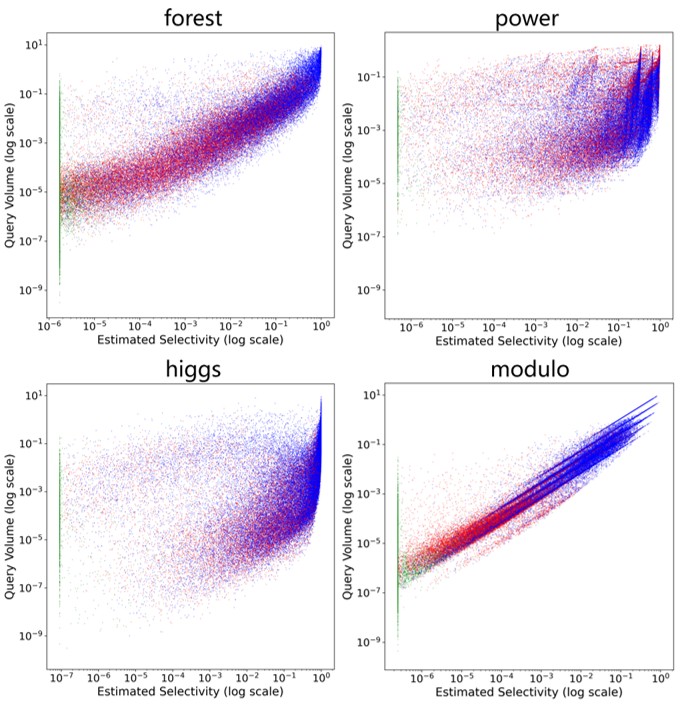}
  \caption{Performance of ADC and ADC- with respect to GMM-estimated selectivity ($x$-axis) and query volume ($y$-axis), both shown on a log scale}
  \Description{Four scatterplots documenting the GMM-estimated selectivity ($x$-axis) and query volume ($y$-axis) of 100,000 training queries each, with values depicted on a log scale, red points being queries where ADC outperforms its GMM predictor, and blue points being queries where the opposite holds. Each of the four graphs has more red points in the lower-left region and a cluster containing almost exclusively blue points in the upper-right region.}
\end{figure}
\subsection{Reducing Median Q-error: Why ADC+}
Our motivation for creating ADC+ is that ADC, in its raw form, performs somewhat awkwardly on median Q-Error, nearly always losing to its GMM predictor, ADC-. In fact, on \textbf{modulo}, ADC ironically increased the distance between median Q-error and 1 by five times, ranking embarrassingly near bottom place on this benchmark despite its decent 95th, 99th, and max Q-error performance.

Why is this? We suspect that it's caused by the fact that, unlike other estimators, there are no "very easy" queries for ADC, and even a query that only asks about two independent attributes or one whose queried region spans 90\% of the sample space must undergo the same prediction-sampling-correction procedure. In this case, ADC's predictor-corrector integration scheme often finds itself adding non-negligible variance to correct an already tiny bias, a choice that is increasingly likely to backfire as the GMM predictor grows more competent.

To test our hypothesis, we create a scatterplot documenting the query volume and GMM-predicted selectivity of 100,000 training queries for each of the 4 datasets, with blue points depicting queries where the GMM predictor performs better than ADC, red points depicting the opposite, and green depicting queries where the two methods are tied (often due to both predicting a cardinality smaller than 1). Indeed, we can see from \textbf{Figure 3} that while the relative proportion of red points tends to increase notably in the lower-left corner, each graph's upper-right corner contains a very dense cluster that is almost exclusively blue.

What's even better, separating a large proportion of blue points from red, at least those on the upper-right corner, can often be done by drawing simple, straight boundary lines, a task ADC performs by training a decision tree. Compared to neural networks, a decision tree of depth 4 takes up less than 3KB and classifies a query in 20 microseconds, making its implementation almost free of costs.

One might worry whether wrongly classifying points that actually benefit from predictor-corrector integration would lead to worse 95th and 99th percentile Q-error results. Fortunately, experiments show that ADC+'s effects on these benchmarks, though small, are seldom negative. On median Q-error, meanwhile, ADC+ acts as the game changer, successfully \textit{halving} the distance between median Q-error and 1 on all \textit{four} datasets while successfully beating MSCN, LW-Tree, and LW-NN on \textbf{power}, \textbf{higgs}, and \textbf{modulo}. What's even better, ADC+ is also able to cut \textit{down} the per-query latency of ADC, because the benefits gained from potentially avoiding two lengthy steps of numerical integration often far outweigh the negligible classification costs.

\section{Conclusion}
The ability of diffusion models to reconstruct complex distributions with high dimensions makes them accurate cardinality estimators. We have proven this by constructing ADC and ADC+, whose accuracy exceeds many famous peers on real-world and synthetic datasets. 

The current capabilities of ADC+ are slightly limited by its long training time and inability to handle categorical attributes, issues we seek to resolve in future studies. We also wonder whether some theorems, network structures, and modeling approaches we proposed upon designing ADC, such as switching from noise to data prediction as $t$ gets larger, might be beneficial for training diffusion models in general, despite influences of the manifold hypothesis and the much higher (3000+) dimensionality. 

Overall, ADC+ provides a good starting point for exploring the use of diffusion models in cardinality estimation, a rather unconventional crossover that can produce very promising fruits.
\appendix
\section{Proof of Theorems}
\begin{theorem 2.1}
\label{thm0}
Let $p_0$ be a probabilistic distribution and $\{p_t|t\in[0,T]\}$ be a family of distributions derived by perturbing $p_0$ using the SDE (\ref{forward}) from time 0 to time $T$. Let $k(t)=e^{\int_0^t\alpha(s)ds}$, and $\sigma^2(t)$ be the solution of the initial value Ordinary Differential Equation (ODE)
\begin{align*}
{d\sigma^2}/{dt}&=\beta(t)-2\alpha(t)\sigma^2(t)\\
\sigma^2(0)&=0,
\end{align*}
then, for each choice of $\alpha$ and $\beta$, it would hold that
\begin{align}
p_{t,\alpha,\beta}&=(\tau_{k(t)}\circ p_0)*\varphi_{\sigma^2(t)}\\
\nabla \log p_{t,\alpha,\beta}(x)&=k(t)\nabla \log p_{\sigma^2(t)k^2(t),0,1}(k(t)x),
\end{align}
where $\tau$ in $\mathbb{R}^d$ is the scaling operator $(\tau_k\circ f)(x)=k^df(kx)$ and $\varphi_{\sigma^2(t)}$ in $\mathbb{R}^d$ is the probability density function of $\mathcal{N}(0,\sigma^2(t)I_d)$.
\end{theorem 2.1}
\begin{proof}
It can be easily verified that under perturbation of the stochastic differential equation 
$dx=-\alpha(t)xdt+\beta(t)dw,$
a random variable $x(0)$ which takes the value $x_0$ with probability 1 would satisfy the distribution $x(t)\sim\mathcal{N}(\frac{x_0}{k(t)},\sigma^2(t))$ at time $t$. Therefore, if $x_0\sim p_0(x)$,$p_t(x)$ would have the expression 
\begin{align*}
p_{t,\alpha,\beta}(x)&=\int_\mathbb{R^d}(2\pi\sigma^2(t))^{-\frac{d}{2}}e^{-\frac{\left\|x-\frac{y}{k(t)}\right\|_2^2}{2\sigma^2(t)}}p_0(y)dy
\\&=\int_\mathbb{R^d}(2\pi\sigma^2(t))^{-\frac{d}{2}}e^{-\frac{\left\|x-\frac{y}{k(t)}\right\|_2^2}{2\sigma^2(t)}}k^n(t)p_0(y)d(\frac{y}{k(t)}),
\end{align*}
hence $p_{t,\alpha,\beta}=(\tau_{k(t)}p_0)*\mathcal{N}(0,\sigma^2(t))$.\\
As 
$$(\tau_\lambda\circ f)*(\tau_\lambda \circ g)=\tau_\lambda\circ (f*g)$$ and $$\tau_\lambda\circ \mathcal{N}(0,\sigma^2)=\mathcal{N}(0,\frac{\sigma^2}{\lambda^2}),$$
this means that 
$$p_{t,\alpha,\beta}(x)=\tau_{k(t)}\circ\big(p_0*\mathcal{N}(0,k^2(t)\sigma^2(t))\big),$$
and hence $p_{t,\alpha,\beta}(x)=\big(\tau_k\circ p_{\sigma^2k^2,0,1}\big)(x)$.\\
Directly differentiating the formula on both sides immediately gives
$$\nabla\log p_{t,\alpha,\beta}(x)=k(t)\nabla\log p_{\sigma^2k^2,0,1}(k(t)x),$$
finishing the proof.
\end{proof}
\begin{theorem 2.2}
Let $s_\theta(x,t):\mathbb{R}^d\times\mathbb{R^+}\to \mathbb{R}^d$ be a neural network and $p_0$ be a probability distribution, and $p_{0,\theta}(x)$ to be the marginal distribution of $\tilde{x}(0)$ after $\tilde{x}(T)\sim\mathcal{N}(0,\sigma^2(T)I_d)$ is perturbed by the reverse diffusion SDE
$$dx=-[\alpha(t)x+\beta^2(t)s_\theta(x,t)]dt+\beta(t)d\hat{w}$$
from time $t$ to time $0$. Using the notation of Theorem \ref{thm0}, the KL divergence between $p_0$ and $p_\theta$ can be upper bounded by
\begin{align}
\label{KLBound}
D_{KL}\left(p_0\left(x\right),p_{0,\theta}\left(x\right)\right)\leq D_{KL}\left(p_{T,\alpha,\beta},\mathcal{N}\left(0,\sigma^2(T)I_d\right)\right) \notag \\
+\frac{1}{2}\int_0^T\mathrm{E}_{p_{t,\alpha,\beta}(x)}[\beta^2(t)\left\|\nabla \log p_{t,\alpha,\beta}(x)-s_\theta (x,t)\right\|_2^2]dt,
\end{align}
while the value of $\log p_\theta (x_0)$ can be lower bounded by
$$
\log p_{0,\theta}(x_0)\geq\mathrm{E}_{p_{0T}(x|x_0)}\log \varphi_{\sigma^2(T)}(x)+\int_0^T n\alpha(t)dt-\frac{1}{2}\int_0^T\beta^2(t)\cdot 
$$
\begin{align}
\label{density}
\mathrm{E}_{p_{0t(x|x_0)}}&[\left\|\nabla \log p_{0t}(x|x_0)-s_\theta(x,t)\right\|_2^2-\left\|\nabla \log p_{0t}(x|x_0)\right\|_2^2]dt
\end{align}
with the equal sign holding in both inequalities if there exists a probability distribution $q$ such that $s_\theta(x,t)=\nabla\log q_t(x), \forall t \in [0,T]$. 
\end{theorem 2.2}
\begin{proof}
Proof for this theorem is elegantly provided by Song, Durkan, Murray and Ermon in \cite{reftwo}.
\end{proof}
\begin{theorem 3.1}
Let $p_0$ be the weighted average of several positive distributions $\{p_i\}_{i=1}^n$, such that
$$\{V_i\}_{i=1}^n:=\{x|p_i(x)>0\}_{i=1}^n$$
are open domains in $\mathbb{R}^d$ intersecting only at their boundaries, and each $p_i$ is Lipschitz with constant $k_i$ in the interior (but not necessarily at the boundary) of $V_i$. Denote $V_0=\mathbb{R}^d\textbackslash\cup_{i=1}^n V_i$
for convenience, and let $\{p_t|t\in(0,T]\}$ be the family of distributions derived by perturbing $p_0$ according to the VE\cite{refone} scheme (results for the VP\cite{refone} scheme can be immediately derived from Theorem \ref{thm0}), then:
\begin{enumerate}
\item $\forall x\in V_i$ such that $B(x,r)\subseteq V_i$, 
$$\nabla \log p_t(x_0)=\nabla \log p_0(x_0)+\delta_1(x_0,t),$$
with $\delta_1(x_0,t)$ an infinitesimal term of order $O(\sigma(t))$.
\item $\forall x \in \partial V_i\cap\partial V_j$ with at most one of $i$ and $j$ being zero, if $B(x_0,r)\subseteq V_i\cup V_j$, $p_{0,i}(x_0)+p_{0,j}(x_0)>0$, and inside $B(x_0,r)$, 
$$\mathrm{d}\left(y,T_{x_0}(\partial V_i\cap \partial V_j)\right)\leq h\left\|y-x_0\right\|_2^2$$ holds for some constant $h$ and all $y\in\partial V_i\cap\partial V_j$, then $\forall \lambda$, $\nabla \log p_t(x_0+\lambda\sqrt{t}\vec{n})$ would equal
$$
\frac{e^{-\frac{\lambda^2}{2}}(p_{0,i}(x_0)-p_{0,j}(x_0))\vec{n}}{2\sigma(t)\left(\Phi(\lambda)p_{0,i}(x_0)+(1-\Phi(\lambda)p_{0,j}(x_0))\right)}+\delta_\sigma(x_0,\lambda,t),
$$
with $\vec{n}$ the unit vector normal to $T_{x_0}(\partial V_i\cap\partial V_j)$, pointing from $V_j$ to $V_i$, and $\delta_\sigma(x_0,\lambda,t)$ a residual term of order $O(1)$.
\item $\forall x\in V_0$, suppose $y_0\in \bar{V_i}$ is the point closest to $x_0$ such that $y_0\in\cup_{i=1}^k\bar{V_i}$. If $P(x\in B_r(y)|x\sim p_0)>M_1r^k, \forall 0<r<R$ for some constants $M_1, k_1$, while
$$\left\|z-x_0\right\|_2^2-\left\|y_0-x_0\right\|_2^2\geq \min\{M_2,k_2\left\|z-y_0\right\|_2^2\}$$
for all $z\in\cup_{i=1}^kV_i$ and some constants $M_2,k_2$, then
$$\nabla\log p_t(x_0)=\frac{y_0-x_0}{\sigma^2(t)}+\delta_{\sigma^2}(x_0,t),$$
with $\delta_{\sigma^2}(x_0,t)$ a term of order $o(\frac{1}{\sigma(t)^{1+\epsilon}})$ for all $\epsilon>0$.
\end{enumerate}
\end{theorem 3.1}
\begin{proof}[Proof of 3.1(1)]
With a change of coordinates, we may assume WLOG that $x_0=0$ and that $\nabla p_0(x)=ke_1$ is parallel to the first coordinate axis. Now, consider the cube $C_r=\{x|\left\|x-x_0\right\|_1<r\}$ where $r=\frac{R}{d}$, and we would have
$$\nabla\log p_t(x)=\frac{\int_{\mathbb{R}^d}p_0(x)(2\pi t)^{-\frac{d}{2}}e^{-\frac{\left\|x\right\|_2^2}{2t}}\cdot\frac{x}{t}dx}{\int_{\mathbb{R}^d}p_0(x)(2\pi t)^{-\frac{d}{2}}e^{-\frac{\left\|x\right\|_2^2}{2t}}dx}$$
which can be decomposed into
$$\frac{\int_{C_r}p_0(x)(2\pi t)^{-\frac{d}{2}}e^{-\frac{\left\|x\right\|_2^2}{2t}}\frac{x}{t}dx+\int_{\mathbb{R}^d\backslash C_r}p_0(x)(2\pi t)^{-\frac{d}{2}}e^{-\frac{\left\|x\right\|_2^2}{2t}}\frac{x}{t}dx}{\int_{C_r}p_0(x)(2\pi t)^{-\frac{d}{2}}e^{-\frac{\left\|x\right\|_2^2}{2t}}dx+\int_{\mathbb{R}^d\backslash C_r}p_0(x)(2\pi t)^{-\frac{d}{2}}e^{-\frac{\left\|x\right\|_2^2}{2t}}dx}$$
We shall now denote the term $\int_{C_r}p_0(x)(2\pi t)^{-\frac{d}{2}}e^{-\frac{\left\|x\right\|_2^2}{2t}}\frac{x}{t}dx$ by $a_1$, $\int_{\mathbb{R}^d\backslash C_r}p_0(x)(2\pi t)^{-\frac{d}{2}}e^{-\frac{\left\|x\right\|_2^2}{2t}}\frac{x}{t}dx$ by $a_2$; $\int_{C_r}p_0(x)(2\pi t)^{-\frac{d}{2}}e^{-\frac{\left\|x\right\|_2^2}{2t}}dx$ by $b_1$, $\int_{\mathbb{R}^d\backslash C_r}p_0(x)(2\pi t)^{-\frac{d}{2}}e^{-\frac{\left\|x\right\|_2^2}{2t}}dx$ by $b_2$. Now, it suffices to show
\begin{itemize}
\item $a_1=\nabla p_0(x_0)+O(\sqrt{t})$ 
\item $a_2=o(t^n)$ for all $n$ 
\item $b_1=p_0(x_0)+O(\sqrt{t})$
\item $b_2=o(t^n)$ for all $n$.
\end{itemize}
The bounds on $a_2$ and $b_2$ are easy to verify, as $\int_{\mathbb{R}^d\backslash C_r}p_0(x)dx\leq 1$ and $e^{-\frac{\left\|x\right\|_2^2}{2t}}\cdot\frac{x}{t^{n+1}}$ decays uniformly to zero in $\mathbb{R}^d\backslash C_r$ for all $n$ as $t\to 0$. For the estimation of $a_1$, we have 
\begin{align*}
a_1&=\int_{C_r}p_0(x)(2\pi t)^{-\frac{d}{2}}e^{-\frac{\left\|x\right\|_2^2}{2t}} \frac{x}{t}dx\\
&=\frac{1}{2}\int_{C_r}\big(p_0(x)-p_0(-x)\big)(2\pi t)^{-\frac{d}{2}}e^{-\frac{\left\|x\right\|_2^2}{2t}}\frac{x}{t}dx\\
&=\int_{C_r}\Big(\langle\nabla p_0(x_0),x\rangle+\frac{1}{2}\big(\epsilon_{p_0}(x)-\epsilon_{p_0}(-x)\big)\Big)(2\pi t)^{-\frac{d}{2}}e^{-\frac{\left\|x\right\|_2^2}{2t}} \frac{x}{t}dx,
\end{align*}
where $\epsilon_{p_0}(x)=p_0(x)-p_0(x_0)-\langle\nabla p_0(x_0),x-x_0\rangle$.\\
Using the fact that $\nabla p_0(x_0)=ke_1$, we know that for $n>1$,
$$\int_{C_r}\langle\nabla p_0(x_0),x\rangle(2\pi t)^{-\frac{d}{2}}e^{-\frac{\left\|x\right\|_2^2}{2t}} \frac{\langle x,v_n\rangle}{t}dx=0,$$
and for $n=1$,
\begin{align*}
&\int_{C_r}\langle\nabla p_0(x_0),x\rangle(2\pi t)^{-\frac{d}{2}}e^{-\frac{\left\|x\right\|_2^2}{2t}} \frac{\langle x,v_1\rangle}{t}dx\\
=&\int_{-r}^r(2\pi t)^{-\frac{1}{2}}e^{-\frac{x_1^2}{2t}}\frac{kx_1^2}{t}dx_1\int_{C_r^{d-1}}(2\pi t)^{-\frac{d-1}{2}}e^{-\frac{x_2^2+...+x_d^2}{2t}}dx_2...dx_d\\
=&k\int_{-\sqrt{t}^{-1}r}^{\sqrt{t}^{-1}r}(2\pi)^{-\frac{1}{2}}e^{-\frac{x_1^2}{2}}x_1^2dx\cdot\left(2\Phi\left(\sqrt{t}^{-1}r\right)-1\right)^{d-1},
\end{align*}
a term which approaches $k=\nabla p_0(x_0)$ exponentially fast as $t\to 0$.\\
Meanwhile, for the residual term, we have
\begin{align*}
&\left\|\frac{1}{2}\int_{C_r}\big(\epsilon_{p_0}(x)-\epsilon_{p_0}(-x)\big)(2\pi t)^{-\frac{d}{2}}e^{-\frac{\left\|x\right\|_2^2}{2t}}\frac{x}{t}dx\right\|_2\\
\leq&\int_{C_r}\epsilon_{p_0}(x)(2\pi t)^{-\frac{d}{2}}e^{-\frac{\left\|x\right\|_2^2}{2t}}\frac{\left\|x\right\|_2}{t}dx\\
\leq&\int_{\mathbb{R}^d}\frac{h\left\|x\right\|_2^3}{t}(2\pi t)^{-\frac{d}{2}}e^{-\frac{\left\|x\right\|_2^2}{2t}}dx\\
\leq&\sqrt{t}\int_{\mathbb{R}^d}h\left\|x\right\|_2^3(2\pi)^{-\frac{d}{2}}e^{-\frac{\left\|x\right\|_2^2}{2}}dx=O(\sqrt{t}),
\end{align*}
hence $a_1=\nabla p_0(x_0)+O(\sqrt{t}).$ Finally, for the bound on $b_1$, we can similarly verify using the symmetry of $C_r$ that 
$$\int_{C_r}(2\pi t)^{-\frac{d}{2}}e^{-\frac{\left\|x\right\|_2^2}{2t}}p_0(x_0)+\langle x-x_0,\nabla p_0(x_0)\rangle dx$$
equals $p_0(x_0)\cdot\big(2\Phi(r\sqrt{t}-1)\big)^d$, hence $p_0(x_0)$ minus an exponentially decaying term, and that 
\begin{align*}
&\left|\int_{C_r}(2\pi t)^{-\frac{d}{2}}e^{-\frac{\left\|x\right\|_2^2}{2t}}\epsilon_{p_0}(x)dx\right|\\
\leq&\int_{C_r}h\left\|x\right\|_2^2(2\pi t)^{-\frac{d}{2}}e^{-\frac{\left\|x\right\|_2^2}{2t}}dx\leq ht
\end{align*}
hence $b_1=p_0(x_0)+O(t)$, finishing the proof.
\end{proof}
\begin{proof}[Proof of 3.1(2)]
We may assume WLOG that $x_0=0$ and $\vec{n}=e_1$.\\
Defining $C_r$ and $a_1, a_2, b_1, b_2$ as in the proof of 3.1(1), we shall continue to verify that 
\begin{itemize}
\item $a_1=\frac{e^{-\frac{\lambda^2}{2}}\big(p_i(0)-p_j(0)\big)}{\sqrt{2\pi t}}+O(1)$ 
\item $a_2=o(t^n)$ for all $n$
\item $b_1=\Phi(\lambda)p_{0,i}(x_0)+(1-\Phi(\lambda)p_{0,j}(x_0)+O(\sqrt{t})$
\item $b_2=o(t^n)$ for all $n$.
\end{itemize}
As before, the bounds on $a_2$ and $b_2$ can be verified using the uniform convergence of $e^{-\frac{\left\|x\right\|_2^2}{2t}}\cdot\frac{x}{t^{n+1}}$ in $\mathbb{R}^n\backslash C_r$.To derive the estimations on $a_1$ and $b_1$, we further divide $C_r$ into the subsets $V_i^+,V_i^-,V_j^+,V_j^-$, where $V_i^-=V_i\cap C_r\cap \{x|x_1<0\}$, $V_i^+=V_i\cap C_r\cap \{x|x_1\geq 0\}$, and $V_j^-,V_j^+$ are similarly defined.\\
This way, $p_0\backslash C_r$ can be decomposed as $p_0=p_{C_r}+\Delta p+\epsilon_p$, where:
\begin{equation}
p_{C_r}(x) = \begin{cases} 
p_{0,i}(x_0) &  x_1\geq0 \\
p_{0,j}(x_0) &  x_1<0 
\end{cases}
\end{equation}
\begin{equation}
\Delta p(x) = \begin{cases} 
p_{0,i}(x_0)-p_{0,j}(x_0) &  x\in V_i^- \\
p_{0,j}(x_0)-p_{0,i}(x_0) &  x\in V_j^+ \\
0 & \mathrm{else}
\end{cases}
\end{equation}
\begin{equation}
\epsilon_p(x) = \begin{cases} 
p_{0,i}(x)-p_{0,i}(x_0) &  x\in V_i \\
p_{0,j}(x)-p_{0,j}(x_0) &  x\in V_j 
\end{cases}
\end{equation}
Now, for the estimate of $p_{C_r}$, we can verify with the methods used to prove 3.1(1) that 
$\int_{C_r}p_{C_r}(x)(2\pi t)^{-\frac{d}{2}}e^{-\frac{\left\|x-\lambda\sqrt{t}e_1\right\|_2^2}{2t}}dx$ equals
$$\big(\Phi(\lambda)p_{0,i}(x_0)+\big(1-\Phi(\lambda)\big)p_{0,j}(x_0)\big)+o(t),$$
and that $\int_{C_r}p_{C_r}(x)(2\pi t)^{-\frac{d}{2}}e^{-\frac{\left\|x-\lambda\sqrt{t}e_1\right\|_2^2}{2t}}\frac{x}{t}dx$ equals
$$\frac{e^{-\frac{\lambda^2}{2}}(p_i(0)-p_j(0))}{\sqrt{2\pi t}}+o(t).$$
For $\epsilon_p$, we can also verify with the methods used to prove 3.1(1) that 
\begin{align*}
&\left|\int_{C_r}\epsilon_p(x)(2\pi t)^{-\frac{d}{2}}e^{-\frac{\left\|x-\lambda\sqrt{t}e_1\right\|_2^2}{2t}}dx\right|\\
\leq&\int_{\mathbb{R}^d}k\left\|x+\lambda\sqrt{t}e_1\right\|_2(2\pi t)^{-\frac{d}{2}}e^{-\frac{\left\|x\right\|_2^2}{2t}}dx,
\end{align*}
hence belonging to $O(\sqrt{t})$, and that 
\begin{align*}
&\left\|\int_{C_r}\epsilon_p(x)(2\pi t)^{-\frac{d}{2}}e^{-\frac{\left\|x-\lambda\sqrt{t}e_1\right\|_2^2}{2t}}\frac{x-\lambda\sqrt{t}e_1}{t}dx\right\|_2\\
\leq&k\int_{\mathbb{R}^n}(2\pi t)^{-\frac{d}{2}}e^{-\frac{\left\|x-\lambda\sqrt{t}e_1\right\|_2^2}{2t}}\frac{\left\|x\right\|_2\cdot\left(\left\|x\right\|_2+\left\|\lambda\sqrt{t}e_1\right\|_2\right)}{t}dx,
\end{align*}
hence belonging to $O(1)$.\\
Finally, for the estimation of $\Delta p$, defining $C_r^{d-1}=\{(x_2,...,x_n)\}$, we have
\begin{align*}
&\left|\int_{C_r}\Delta p(x)(2\pi t)^{-\frac{d}{2}}e^{-\frac{\left\|x-\lambda\sqrt{t}e_1\right\|_2^2}{2t}}dx\right|_2\\
\leq&\int_{C_r^{d-1}}\int_{-h\left\|y\right\|_2^2}^{h\left\|y\right\|_2^2}\left|\Delta p(x)\right|(2\pi t)^{-\frac{1}{2}}e^{-\frac{(x_1-\lambda\sqrt{t})^2}{2t}}dx_1(2\pi t)^{-\frac{d-1}{2}}e^{-\frac{\left\|y\right\|_2^2}{2t}}dy\\
\leq&\sup_x\left|\Delta p(x)\right|\int_{C_r^{d-1}}\frac{2h\left\|y\right\|_2^2}{\sqrt{2\pi t}}(2\pi t)^{-\frac{d-1}{2}}e^{-\frac{\left\|y\right\|_2^2}{2t}}dy=O(\sqrt{t})
\end{align*}
and that, similarly,
\begin{align*}
&\int_{C_r^{d-1}}\int_{-h\left\|y\right\|_2^2}^{h\left\|y\right\|_2^2}|\Delta p|(2\pi t)^{-\frac{d}{2}}e^{-\frac{\left\|x-\lambda\sqrt{t}e_1\right\|_2^2}{2t}}\frac{\left\|(x_1,y)-\lambda\sqrt{t}e_1\right\|_2}{t}dx_1dy\\
\leq&\sup_x\left|\Delta p(x)\right|\int_{C_r^{d-1}}(2\pi t)^{-\frac{d-1}{2}}e^{-\frac{\left\|y\right\|_2^2}{2t}}\frac{2h\left\|y\right\|_2^2\left(h\left\|y\right\|_2+\lambda\sqrt{t}+\left\|y\right\|_2\right)}{t\sqrt{2\pi t}}dy\\
=&O(1)
\end{align*}
Summing all of the above, we conclude that 
$$a_1=\frac{e^{-\frac{\lambda^2}{2}}\big(p_i(0)-p_j(0)\big)}{\sqrt{2\pi t}}+O(1)$$
and that 
$$b_1=\Phi(\lambda)p_{0,i}(x_0)+(1-\Phi(\lambda)p_{0,j}(x_0)+O(\sqrt{t}),$$
finishing the proof.
\end{proof}
\begin{proof}[Proof of 3.1(3)]
We shall still assume WLOG that $y_0=0$. Define the ball $B_{t,\delta}$ as $\{x| \left\|x-y_0\right\|_2<R\sqrt{t}^{1-\delta}\}$, then, $\nabla \log p_t(x_0)+\frac{x_0}{t}$ would equal 
$$\frac{\int_{B_{t,\delta}}\frac{p_0(x)}{(2\pi t)^{d/2}}e^{-\frac{\left\|x-x_0\right\|_2^2}{2t}}\frac{x}{t}dx+\int_{\mathbb{R}^d\backslash B_{t,\delta}}\frac{p_0(x)}{(2\pi t)^{d/2}}e^{-\frac{\left\|x-x_0\right\|_2^2}{2t}}\frac{x}{t}dx}{\int_{B_{t,\delta}}\frac{p_0(x)}{(2\pi t)^{d/2}}e^{-\frac{\left\|x-x_0\right\|_2^2}{2t}}dx+\int_{\mathbb{R}^d\backslash B_{t,\delta}}\frac{p_0(x)}{(2\pi t)^{d/2}}e^{-\frac{\left\|x-x_0\right\|_2^2}{2t}}dx}.$$
Still denoting the two terms in the numerator as $a_1, a_2$ and the two terms in the denominator as $b_1, b_2$, we now seek to prove that  
\begin{itemize}
\item $a_1<b_1R\sqrt{t}^{-(1+\delta)}$
\item $a_2=o(t^n)e^{-\frac{\left\|x_0\right\|_2^2}{2t}}$ for all $n>0$
\item $b_1>Ct^ke^{-\frac{\left\|x_0\right\|_2^2}{2t}}$
\item $b_2=o(t^n)e^{-\frac{\left\|x_0\right\|_2^2}{2t}}$ for all $n>0$
\end{itemize}
As $e^{-\frac{\left\|x-x_0\right\|_2^2}{2t}}\leq e^{-\frac{\left\|x_0\right\|_2^2}{2t}}\max \{e^{-\frac{k_2t^{1-\delta}R^2}{2t}},e^{-\frac{M_2}{2t}}\}$ on $\mathbb{R^d}\backslash B_{t,\delta}$, the bound for $b_2$ is easy to derive. The bound for $a_2$ can be similarly proven, as for small enough $t$, $\left\|x-x_0\right\|_2>\left\|x\right\|_2-\left\|x_0\right\|_2$ ensures that
$$e^{-\frac{\left\|x-x_0\right\|_2^2}{2t}}\left\|x\right\|_2\leq \max \{e^{-\frac{\left\|x-x_0\right\|_2^2}{2t}},e^{-\frac{\left\|x_0\right\|_2^2}{2t}}\}\cdot2\left\|x_0\right\|_2.$$
To prove the lower bound on $b_1$, consider the ball $B_{t^2,0}$. Noting that 
$$e^{-\frac{\left\|x-x_0\right\|_2^2}{2t}}>e^{-\frac{\left\|x_0\right\|_2^2}{2t}}e^{-\left\|x_0\right\|_2}e^{-\frac{t}{2}}$$ holds for all $x\in B_{t^2,0}$, $P(x\in B_r(y)|x\sim p_0)>M_1r^{k_1}$ implies that 
$$\int_{B_{t^2,0}}(2\pi t)^{-\frac{d}{2}}e^{-\frac{\left\|x-x_0\right\|_2^2}{2t}}p_0(x)dx\geq M_1t^{k_1}\cdot e^{-\frac{\left\|x_0\right\|_2^2}{2t}}e^{-\left\|x_0\right\|_2}e^{-\frac{t}{2}}.$$
Finally, $a_1<b_1R\sqrt{t}^{-(1+\delta)}$ follows immediately from the fact that $\left\|x\right\|_2t^{-1}\leq R{\sqrt{t}^{-(1+\delta)}}$ in $B_{t,\delta}$.
\end{proof}
\begin{theorem 4.1}
Let $\tilde{x}$ be a random variable taking values $\{x_i\}_{1\leq i\leq n}$ with equal probability (as is \textbf{always} the case when we construct $p_t$ from i.i.d. sample points), such that $\frac{1}{n}\sum_{i=1}^nx_i=0$. Take $p_t(x)$ to be the marginal distribution of $x(t)$ when $x(0)=\tilde{x}$ is perturbed by the diffusion SDE
$$dx=-\alpha(t)xdt+\beta(t)dw$$
from time 0 to time $t$. Then, we would have 
\begin{enumerate}
\item
$\mathrm{E}_{p_t(x)}\left\|\nabla\log p_t(x)+\frac{x}{\sigma^2(t)}\right\|_2^2\leq \frac{f(t,\tilde{x})}{\sigma^4(t)k^2(t)}.$
\item
$\mathrm{E}_{p_t(x)}\left\|\nabla \log p_t(x)+\frac{x}{\sigma^2(t)}\right\|_2^2\leq \frac{f(t,\tilde{x})g(t,\tilde{x})}{\sigma^4(t)k^2(t)}.$
\end{enumerate}
where
$$f(t,\tilde{x})=\frac{1}{n}\sum_{i=1}^n\left\|x_i\right\|_2^2$$
and
$$g(t,\tilde{x})=\frac{1}{n}\sum_{i=1}^n(e^{\frac{2\left\|x_i\right\|_2^2+f(t,\tilde{x})}{2\sigma^2(t)k^2(t)}}-1).$$
\end{theorem 4.1}
\begin{proof}[Proof of 4.1(1)]
We only need to prove the lemma under the VE perturbation scheme ($\alpha(t)=0,\beta(t)=1$), and all other cases follow immediately from the equivalence of perturbation schemes.\\
In this case, $p_t(x)$ can be expressed as 
$$p_t(x)=\frac{1}{n}\sum_{i=1}^n\mathcal{N}\big(x_i,t\big),$$
meaning that $\mathrm{E}_{p_t(x)}\left\|\nabla\log p_t(x)+\frac{x}{\sigma^2(t)}\right\|_2^2$ would equal
$$\int_{\mathbb{R}^d}\frac{\left\|\frac{1}{n}\sum_{i=1}^n\mathcal{N}\big(x_i,t\big)(x)\cdot\frac{x_i}{t}\right\|_2^2}{\frac{1}{n}\sum_{i=1}^n\mathcal{N}\big(x_i,t\big)(x)}dx,$$
which, by Jensen's Inequality, is no greater than 
$$\int_{\mathbb{R}^d}\frac{1}{n}\sum_{i=1}^n\Big(\mathcal{N}\big(x_i,t)\big)(x)\left\|\frac{x_i}{t}\right\|_2^2\Big)dx=f(t,\tilde{x}),$$
noting that $k(t)=1$ under the VE scheme.
\end{proof}
\begin{proof}[Proof of 4.1(2)]
As $\frac{1}{n}\sum_{i=1}^n x_i=0$, $$\nabla \log p_t(x)+\frac{x}{\sigma^2(t)}=\frac{\frac{1}{n}\sum_{i=1}^n\big(\mathcal{N}\big(x_j,\sigma^2(t)\big)(x)-k\big)\frac{x_i}{\sigma^2(t)}}{\frac{1}{n}\sum_{i=1}^n\mathcal{N}\big(x_j,\sigma^2(t)\big)(x)}$$
would hold for any $k\in \mathbb{R}$. Thus, $\left\|\nabla\log p_t(x)+\frac{x}{\sigma^2(t)}\right\|_2^2$ would be no greater than
$$\left(\frac{1}{n}\sum_{i=1}^n\left(\frac{\mathcal{N}\big(x_i,\sigma^2(t)\big)(x)-k}{\frac{1}{n}\sum_{j=1}^n\mathcal{N}\big(x_j,\sigma^2(t)\big)(x)}\right)^2\right)\left(\frac{1}{n}\sum_{i=1}^n\left\|\frac{x_i}{\sigma^2(t)}\right\|_2^2\right).$$
In the above formula, the second term is easily calculated as $f(t,\tilde{x})$. To upper bound the first term, we take $k$ to be $\frac{1}{n}\sum_{j=1}^n\mathcal{N}\big(x_j,\sigma^2(t)\big)(x)$, after which the original expression can be proven equal to
$$\int_{\mathbb{R}^d}\frac{1}{n}\sum_{i=1}^n\frac{\big(\mathcal{N}\big(x_i,\sigma^2(t)\big)(x)\big)^2}{\frac{1}{n}\sum_{i=1}^n\mathcal{N}\big(x_j,\sigma^2(t)\big)(x)}dx-1.$$
Using the mean value inequality, $\frac{1}{n}\sum_{i=1}^n\mathcal{N}\big(x_j,\sigma^2(t)\big)(x)$ can be lower bounded by 
$$\big(2\pi\sigma^2(t)\big)^{-\frac{d}{2}}e^{-\frac{\left\|x\right\|_2^2-\frac{2}{n}\sum_{j=1}^n\langle x_j,x\rangle+\frac{1}{n}\sum_{j=1}^n\left\|x_j\right\|_2^2}{2\sigma^2(t)}}.$$
Plugging the above expression into the denominator makes the original expression integrable, resulting in the final estimation $f(t,\tilde{x}),g(t,\tilde{x})$.
\end{proof}

\textbf{Note}: Our model, ADC, estimates the moment of transition from $O(\frac{1}{\sigma^4(t)k^2(t)})$ decay to $O(\frac{1}{\sigma^6(t)k^4(t)})$ decay using $\max \left\|x\right\|_2^2$ rather than $\mathrm{E}_{p_0(x)}\left\|x\right\|_2^2$. This is due to the fact that, for $p_0$ with non-bounded support, such a transition might be arbitrarily postponed even if $\mathrm{E}_{p_0(x)}\left\|x\right\|_2^2$ remains bounded.\\
For example, let $\tilde{x_N}$ take the value $-\frac{1}{N}$ with probability $\frac{N^2}{N^2+1}$, the value $N$ with probability $\frac{1}{N^2+1}$, and $p_{t,N}(x)$ denote the marginal distribution of $x(t)$ when $x(0)=\tilde{x_N}$ is perturbed using the diffusion SDE above. Then, we would have
$$\lim_{N\to\infty}\sigma^4(t)k^2(t)\mathrm{E}_{p_{t,N}(x)}\left\|\nabla\log p_{t,N}(x)+\frac{x}{\sigma^2(t)}\right\|_2^2=1.$$
for all $t>0$.

\begin{theorem 4.2}
Define $\mathrm{Var}(x_0)$ to be the error term's variance upon approximating $\log p_{\epsilon,\theta}(x_0)$ with formula (\ref{QMC}). Assuming that 
\begin{enumerate}
\item The partition is dense enough that for the same $y$, fluctuations of $\Delta t_i\delta_{t,space}(x_0,y)$ for different $t$ in $[t_{i-1},t_{i}]$ is negligible compared to the total integration error,
\item The error terms $\{\delta_{i,time}(x_0)\}_{i=1}^N$ are negligible compared to $\{\delta_{\tilde{t}_i,space}(x_0,y(\tilde{t_i}))\}_{i=1}^N$, and
\item The sequence $\{y(\tilde{t}_i)\}_{i=1}^N$ consists of i.i.d. samples from $\mathrm{U}([0,1]^d)$
\end{enumerate}
hold for all $x_0$, $\mathrm{E}_{x_0\sim p_0}\mathrm{Var}(x_0)$ would be minimized under a timestep scheme where $t_i-t_{i-1}$ is chosen inversely proportionate to $$\big(\mathrm{E}_{x_0\sim p_0}[\mathrm{Var}_{y}\delta_{\tilde{t}_i,space}(x_0,y)]\big)^{\frac{1}{2}}.$$
\end{theorem 4.2}
\begin{proof}
The Monte Carlo estimator (6) has bias $\sum_{i=1}^N\delta_{i,time}(x_0),$ which is negligible under assumption 2; and variance 
$$\sum_{i=1}^N(\Delta t_i)^2\mathrm{Var}_{y(\tilde{t_i})}(\delta_{\tilde{t_i},space}(x_0,y(\tilde{t_i}))),$$
which, under assumption 1, can be approximated by
$$\int_0^Th(t)\mathrm{Var}_{y}(\delta_{_{t,space}}(x_0,y))dt,$$
where 
$h(t)=\sum_{i=1}^n\mathcal{X}_{[t_{i-1},t_i]}\Delta t_i$.\\
For a timestep scheme that partitions $[0,T]$ into $N$ intervals, we always have $\int_0^T\frac{1}{h(t)}dt=N$. Therefore, the calculus of variations shows that
 $$\mathrm{E}_{x_0\sim p_0}\int_{T_{trunc}}^Th(t)\mathrm{Var}_y(\delta_{t,space}(x_0,y))dt$$ 
attains its minimum when
$$\frac{1}{h(t)}=C\sqrt{\mathrm{E}_{x_0\sim p_0}\mathrm{Var}_y(\delta_{t,space}(x_0,y))},$$
and hence when the step size is chosen inversely proportionate to the square root of $\mathrm{E}_{x_0\sim p_0}\mathrm{Var}(\delta_{t,space}(x_0,y))$.
\end{proof}
\clearpage
\bibliographystyle{ACM-Reference-Format}
\bibliography{sample-base}
\end{document}